\newcolumntype{C}[1]{>{\centering\let\newline\\\arraybackslash\hspace{0pt}}m{#1}}
\newcolumntype{R}[1]{>{\raggedleft\let\newline\\\arraybackslash\hspace{0pt}}m{#1}}
\newcolumntype{L}[1]{>{\raggedright\let\newline\\\arraybackslash\hspace{0pt}}m{#1}}
\newcommand\footnoteref[1]{\protected@xdef\@thefnmark{\ref{#1}}\@footnotemark}
\newtheorem{prop}{Proposition}
\newtheorem{h0}{Null Hypothesis}
\newcommand{\hide}[1]{}
\newcommand{\bit}{\begin{compactitem}}
\newcommand{\eit}{\end{compactitem}}
\newcommand{\ben}{\begin{compactenum}}
\newcommand{\een}{\end{compactenum}}
\newcommand{\method}{\textsc{NetEffect}\xspace}
\newcommand{\methodtest}{\textsc{NetEffect\textunderscore Test}\xspace}
\newcommand{\methodest}{\textsc{NetEffect\textunderscore Est}\xspace}
\newcommand{\methodexp}{\textsc{NetEffect\textunderscore Exp}\xspace}
\newcommand{\methodhom}{\textsc{NetEffect}-Hom\xspace}
\newcommand{\fabp}{\textsc{FaBP}\xspace}
\newcommand{\linbp}{\textsc{LinBP}\xspace}
\newcommand{\hols}{\textsc{HOLS}\xspace}
\newcommand{\gcn}{\textsc{GCN}\xspace}
\newcommand{\mixhop}{\textsc{MixHop}\xspace}
\newcommand{\gprgnn}{\textsc{GPR-GNN}\xspace}
\newcommand{\appnp}{\textsc{APPNP}\xspace}
\newcommand{\explain}{{Explainable\xspace}}
\newcommand{\scale}{{Scalable\xspace}}
\newcommand{\neteffect}{{\em generalized network-effects}\xspace}
\newcommand{\accurate}{Accurate\xspace}
\newcommand{\general}{General\xspace}
\newcommand{\nef}{GNE\xspace}
\newcommand{\emphasis}{``emphasis'' matrix\xspace}
\newcommand{\NEF}{Network Effect Formula\xspace}
\newcommand{\xophily}{\textsc{x-}ophily\xspace}
\newcommand{\theory}{{Principled\xspace}}
\newcommand{\codeurl}{\url{https://github.com/mengchillee/NetEffect}}
\newcommand{\rulesep}{\unskip\ \vrule\ }
\newcommand{\classes}{c\xspace}
\newcommand{\n}{n\xspace}
\newcommand{\red}[1]{\textcolor{red}{#1}}
\newcommand{\blue}[1]{\textcolor{blue}{#1}}
\newcommand{\gold}[1]{\cellcolor{green}{#1}}
\newcommand{\silver}[1]{\cellcolor{green!35}{#1}}
\newcommand{\bronze}[1]{\cellcolor{green!12}{#1}}
\newcommand{\goldc}[1]{\colorbox{green}{#1}}
\newcommand{\silverc}[1]{\colorbox{green!35}{#1}}
\newcommand{\bronzec}[1]{\colorbox{green!12}{#1}}
\newcommand{\RNum}[1]{\uppercase\expandafter{\romannumeral #1\relax}}
\newcommand{\emphasize}[1]{\textbf{\underline{\smash{#1}}}}
\begin{document}

\title{\method: Discovery and Exploitation of Generalized Network Effects}



\author{Meng-Chieh Lee\inst{1} \and
Shubhranshu Shekhar\inst{2} \and
Jaemin Yoo\inst{3} \and
Christos Faloutsos\inst{1}}
\institute{Carnegie Mellon University, Pittsburgh, USA  \\
\email{\{mengchil, christos\}@cs.cmu.edu}
\and Brandeis University, Waltham, USA \\
\email{sshekhar@brandeis.edu} \\
\and KAIST, Seoul, South Korea \\
\email{jaemin@kaist.ac.kr}
}


\maketitle

\vspace{-4mm}
\begin{abstract}
Given a large graph with few node labels, how can we 
(a) identify whether there is \neteffect~(\nef) or not, 
(b) estimate \nef to explain the interrelations among node classes, and 
(c) exploit \nef efficiently to improve the performance on downstream tasks? 
The knowledge of \nef is valuable for various tasks like node classification and targeted advertising. 
However, identifying \nef such as homophily, heterophily or their combination is challenging in real-world graphs due to limited availability of node labels and noisy edges. 
We propose \method, a graph mining approach to address the above issues, enjoying the following properties:
(i) \theory: a statistical test to determine the presence of \nef in a graph with few node labels; 
(ii) \general and \explain: a closed-form solution to estimate the specific type of \nef observed; and 
(iii) \accurate and \scale: the integration of \nef for accurate and fast node classification.
Applied on real-world graphs, \method discovers the unexpected absence of \nef in numerous graphs, which were recognized to exhibit heterophily. 
Further, we show that incorporating \nef is effective on node classification.
On a million-scale real-world graph,  
\method achieves {\bf over 7$\mathbf{\times}$} speedup ({\em $14$ minutes} vs. $2$ hours) compared to most competitors.

\end{abstract}
\vspace{-8mm}
\keywords{Network Effects, Heterophily Graphs, Node Classification}
\vspace{-3mm}

\section{Introduction} \label{sec:intro}
\vspace{-3mm}
Given a large graph with few node labels and no node features, how to check whether the graph structure is useful for classifying nodes or not? 
Node classification is often employed to infer labels on large real-world graphs.
Since manual labeling is expensive and time-consuming, it is common that only few node labels are available.
For example, in a million-scale social network, identifying even a fraction (say $5\%$) of users' groups is prohibitive, limiting the application of methods that assume many labels are given.
Recently, with prevalence of graphs in industry and academia alike, there is a growing need among users to know whether these graph structures provide meaningful information for inference tasks.
Therefore, before investing a huge amount of time and resources into potentially unsuccessful experiments, a preliminary test is earnestly needed.\looseness=-1

\begin{figure}[t]
\centering
\subfloat[\label{fig:c1} \methodtest:\\\theory]
{\includegraphics[scale=0.3]{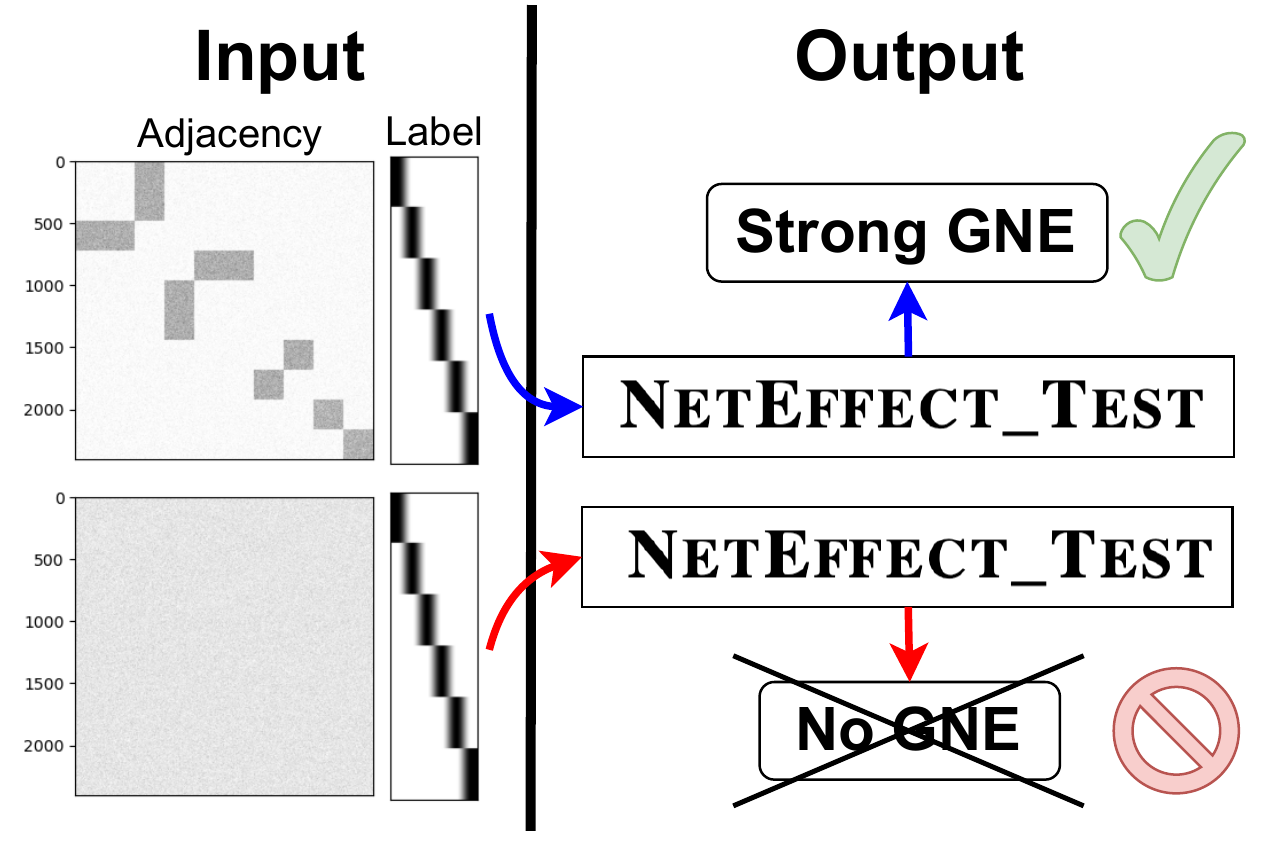}}
\rulesep
\subfloat[\label{fig:c2} \methodest:\\\explain\xspace and \general]
{\includegraphics[scale=0.3]{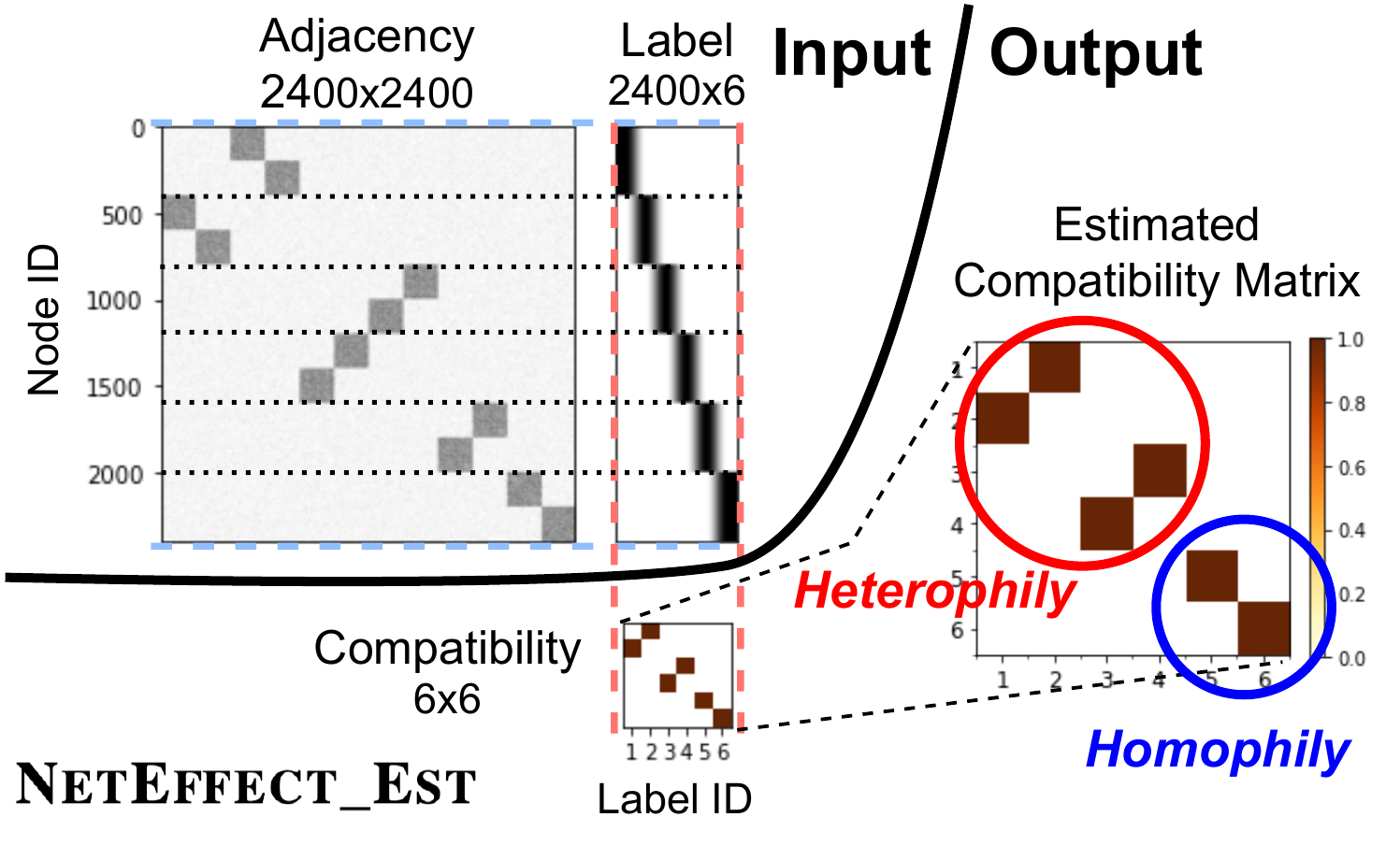}}
\rulesep
\subfloat[\label{fig:c3} \methodexp:\\\accurate and \scale]
{\includegraphics[scale=0.3]{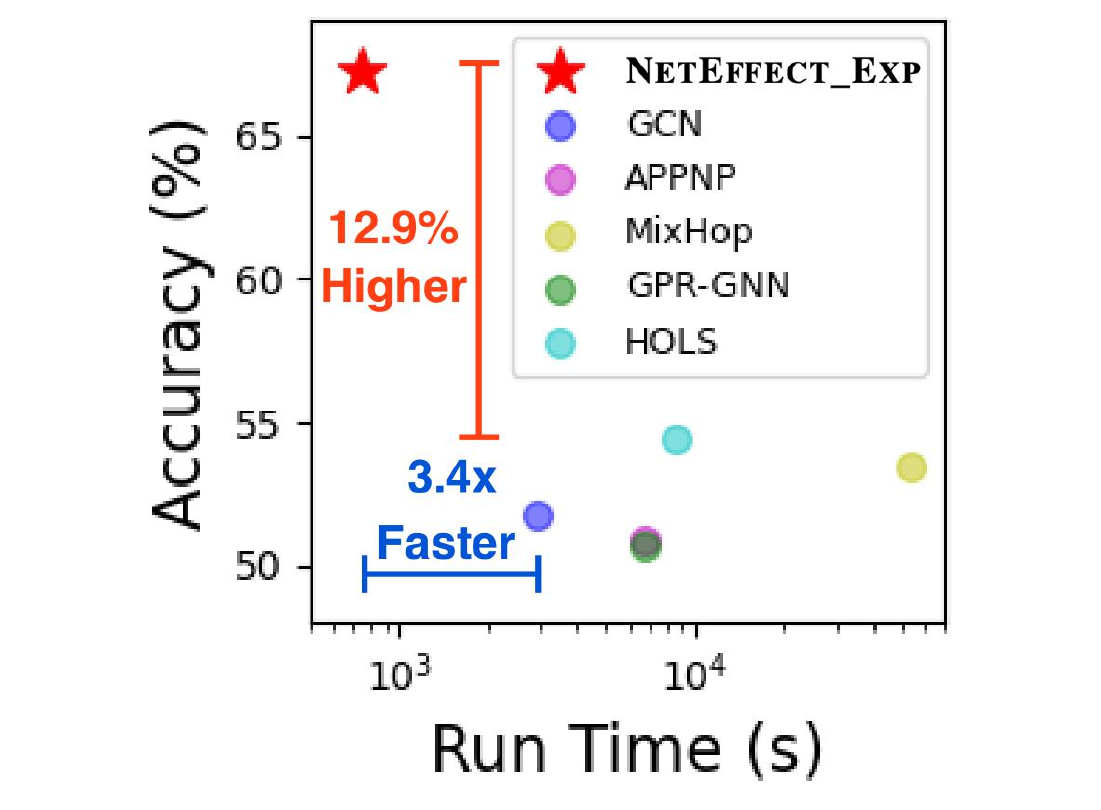}}
\vspace{-3mm}
\caption{\label{fig:crown} \emphasize{\method works well}, thanks to its three novel contributions:
(a) \methodtest statistically the existence of \nef.
(b) \methodest explains the graph with 
the \xophily compatibility matrix.
(c) \methodexp wins and is fast.
}
\vspace{-6mm}
\end{figure}


That is to say, we want to know whether the given graph has \neteffect (\nef) or not.
A graph with \nef provides meaningful information through the structure that can be used to identify the labels of nodes.
For example, ``talkative person tends to make friends with talkative ones'' denotes homophily, 
while ``teenagers incline to interact with the ones that have opposite gender on social media'' denotes heterophily.
It is thus important to distinguish which \nef the graph has, i.e., homophily, heterophily, or both (which we call ``\xophily''), if there is any.
Given $c$ classes, an intuitive way to describe \nef is via a $c \times c$ compatibility matrix, which shows the relative influence between each class pair.
It can be used to explain the graph property, as well as be exploited to better assign the labels in the graph.

However, identifying \nef is commonly neglected in literature:
inference-based methods assume that the relationship is given by domain experts \cite{eswaran2020higher,DBLP:journals/pvldb/GatterbauerGKF15};
most graph neural networks (GNNs) assume homophily \cite{kipf2016semi,klicpera2018predict,wu2019simplifying}.
Although some previous works \cite{lim2021large,ma2022is,zhu2020beyond} use homophily statistics to analyze the given graph, our work has a very different direction because of three reasons.
First, they are designed to identify the absence of homophily, and thus can not clearly distinguish \nef, which includes different non-homophily cases, i.e., heterophily, both, or no \nef.
Second, to compute accurate statistics, they use all the node labels in the graph, which is impractical during node classification.
Finally, their analyses rely heavily on the results of GNNs, which means in addition to the graph structure, the node features also significantly influence the conclusions of \nef.
In contrast, our work aims to answer $3$ research questions:
\begin{compactenum}[{RQ}1.]
    \item {\bf Hypothesis Testing}: How to identify whether the given graph has \nef or not, with only few labels? 
    \item {\bf Estimation}: How to estimate \nef in a principled way, and explain the graph with the estimation?
    \item {\bf Exploitation}: How to efficiently exploit \nef on node classification with only few labels?
\end{compactenum}
\noindent
We propose \method, with $3$ contributions as the corresponding solutions:
\ben
\item {\bf \theory: \methodtest} 
    uses statistical tests to decide whether \nef exists at all.
    Fig.~\ref{fig:c1} shows how it works, and Fig.~\ref{fig:dis} shows its discovery, where many large real-world datasets known as heterophily graphs have little \nef.
\item {\bf \general and \explain: \methodest} explains whether the graph is homophily, heterophily, or \xophily by precisely estimating the compatibility matrix with the derived closed-form formula.
In Fig.~\ref{fig:c2}, it explains the interrelations of classes by the estimated compatibility matrix, which implies \xophily.

\item {\bf \accurate and \scale: \methodexp} efficiently exploits \nef to perform better in node classification.
It wins in both accuracy and time on a million-scale heterophily graph ``Pokec-Gender'', only requiring {\em $14$ minutes} (Fig.~\ref{fig:c3}).
\een



\noindent{\bf Reproducibility}: The code\footnote{\codeurl} and the extended version with appendix\footnote{\url{https://arxiv.org/abs/2301.00270}} are made public.


\vspace{-4mm}
\section{Background and Related Work} \label{sec:background}
\vspace{-2.5mm}



\subsection{Background}
\vspace{-2mm}
\noindent\textbf{Notation.}
Let $G$ be an undirected and unweighted graph with $n$ nodes and $m$ edges and an adjacency matrix ${\boldsymbol A}$. 
Each node $i$ has a unique label $l(i) \in \{1, 2, \dots, c\}$, where $c$ is the number of classes. 
Let ${\boldsymbol E} \in \mathbb{R}^{\n \times \classes}$ be the initial belief matrix with the prior information, i.e., the labeled nodes. 
${\boldsymbol E}_{ik} = 1$ if $l(i) = k$,
and ${\boldsymbol E}_{ik} = 0$ if $l(i) \neq k$. 
For the nodes without labels, their entries are set to $1 / c$. 
${\boldsymbol H} \in \mathbb{R}^{\classes \times \classes}$ is a row-normalized compatibility matrix, where ${\boldsymbol H}_{ku}$ is the relative influence of class $l$ on class $u$.
The residual of a matrix around $k$ is $\hat{{\boldsymbol Y}} = {\boldsymbol Y} - k\times\mathbbm{{\boldsymbol 1}}$, where 
$\mathbbm{{\boldsymbol 1}}$ is matrix of ones.

\noindent\textbf{Belief Propagation (BP).} 
\fabp~\cite{DBLP:conf/pkdd/KoutraKKCPF11} and \linbp \cite{DBLP:journals/pvldb/GatterbauerGKF15} accelerate BP by approximating the final belief assignment. 
In particular, \linbp approximates the final belief as:
\vspace{-1mm}
\begin{equation} \label{eq:prop}
\hat{{\boldsymbol B}} = \hat{{\boldsymbol E}} + {\boldsymbol A}\hat{{\boldsymbol B}}\hat{{\boldsymbol H}},
\vspace{-2mm}
\end{equation}
where $\hat{{\boldsymbol B}}$ is a residual final belief matrix, initialized with zeros.
The compatibility matrix ${\boldsymbol H}$ and initial beliefs ${\boldsymbol E}$ are centered around $1 / c$ to ensure convergence.
\hols \cite{eswaran2020higher} is a BP-based method, which propagates the labels by weighing with higher-order cliques.

\vspace{-5mm}
\subsection{Related Work}
\vspace{-2mm}
Table~\ref{tab:salesman} presents qualitative comparison of state-of-the-art approaches against our proposed \method.
Notice that only \method fullfills all the specs.

\setlength\intextsep{0pt}
\begin{wrapfigure}{L}{0.5\columnwidth}
\begin{minipage}[t]{0.5\columnwidth}
\captionof{table}{
    \emphasize{\method matches all specs}, while baselines miss one or more.
    `?' and `N/A' denote unclear and not applicable.
    \label{tab:salesman}
}
\vspace{-3mm}
\centering{\resizebox{1\columnwidth}{!}{
    \begin{tabular}{ l | l | c  c  c c | c }
        \hline
        \multicolumn{2}{c|}{\bf Property} & 
        \rotatebox{90}{BP \cite{DBLP:journals/pvldb/GatterbauerGKF15,DBLP:conf/pkdd/KoutraKKCPF11}} &
        \rotatebox{90}{HOLS \cite{eswaran2020higher}} & 
        \rotatebox{90}{General GNNs \cite{kipf2016semi,klicpera2018predict}} & 
        \rotatebox{90}{Het. GNNs \cite{abu2019mixhop,chien2021adaptive}} & 
        \rotatebox{90}{\bf \method} \\ 
        \hline
        \multirow{2}{*}{\bf 1. \theory} & 
        1.1. Statistical Test & 
        \CheckmarkBold & \CheckmarkBold & & & \CheckmarkBold \\ 
         & 
        1.2. Convergence Guarantee & 
        \CheckmarkBold & \CheckmarkBold & & & \CheckmarkBold \\ 
        \hline
        {\bf 2. \explain} & 
        2.1 Compatibility Matrix Estimation &
        N/A & N/A &  &  & \CheckmarkBold \\ 
        \hline
        \multirow{2}{*}{\bf 3. \general} &  
        3.1 Handle Heterophily &  
        ? & ? & ? & \CheckmarkBold & \CheckmarkBold \\ 
         &  
        3.2 Handle \nef &  
        ? & ? &  &  & \CheckmarkBold \\ 
        \hline
        \multirow{2}{*}{\bf 4. \scale} &
        4.1. Linear Complexity & 
        \CheckmarkBold &  & \CheckmarkBold & \CheckmarkBold & \CheckmarkBold \\
         &
        4.2. Thrifty & 
        \CheckmarkBold & \CheckmarkBold & ? & ? & \CheckmarkBold \\
        \hline
    \end{tabular}
}}
\end{minipage}
\end{wrapfigure}

\noindent\textbf{Analysis by Homophily Statistics.}
Many studies \cite{lim2021large,ma2022is,zhu2020beyond} utilize homophily ratio to measure how common the labels of the connected node pairs share the same class.
Our work focuses on very different aspects, as discussed in the introduction.

\noindent\textbf{Node Classification.}
GCN \cite{kipf2016semi} and \appnp \cite{klicpera2018predict} incorporate neighborhood information to do better predictions and assume homophily. 
\mixhop \cite{abu2019mixhop}, \gprgnn \cite{chien2021adaptive}, and $\text{H}_{2}$GCN \cite{zhu2020beyond} make no assumption of homophily.
Nevertheless, $\text{H}_{2}$GCN requires too much memory and thus can not handle large graphs.
\textsc{LINKX} \cite{lim2021large} introduces multiple large heterophily datasets, but it is not applicable to graphs without node features. 


\vspace{-5mm}
\section{Proposed \nef Test} \label{sec:netest}
\vspace{-3mm}


\begin{figure}[t]
    \captionsetup[subfloat]{farskip=2pt,captionskip=-4pt}
	\centering
	\subfloat[\label{fig:dis1} \scriptsize ``Genius'': No \nef]
	{\includegraphics[height=0.83in]{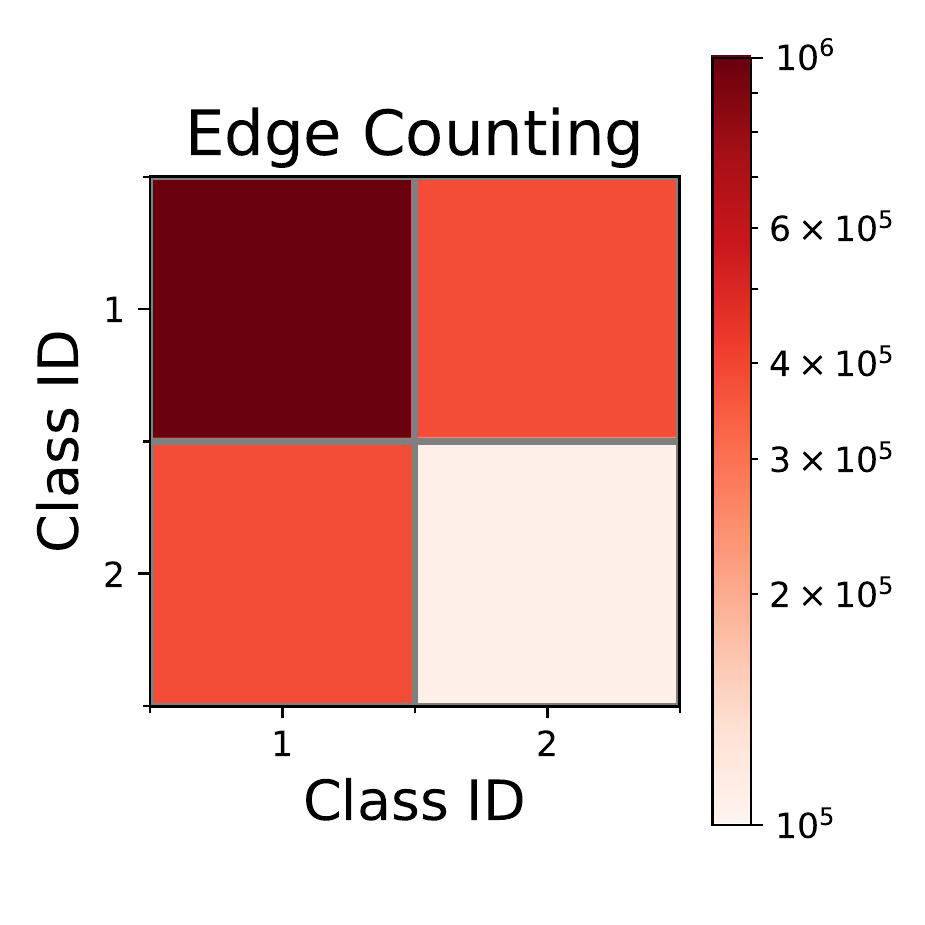}
     \hspace{-3mm}
	 \includegraphics[height=0.86in]{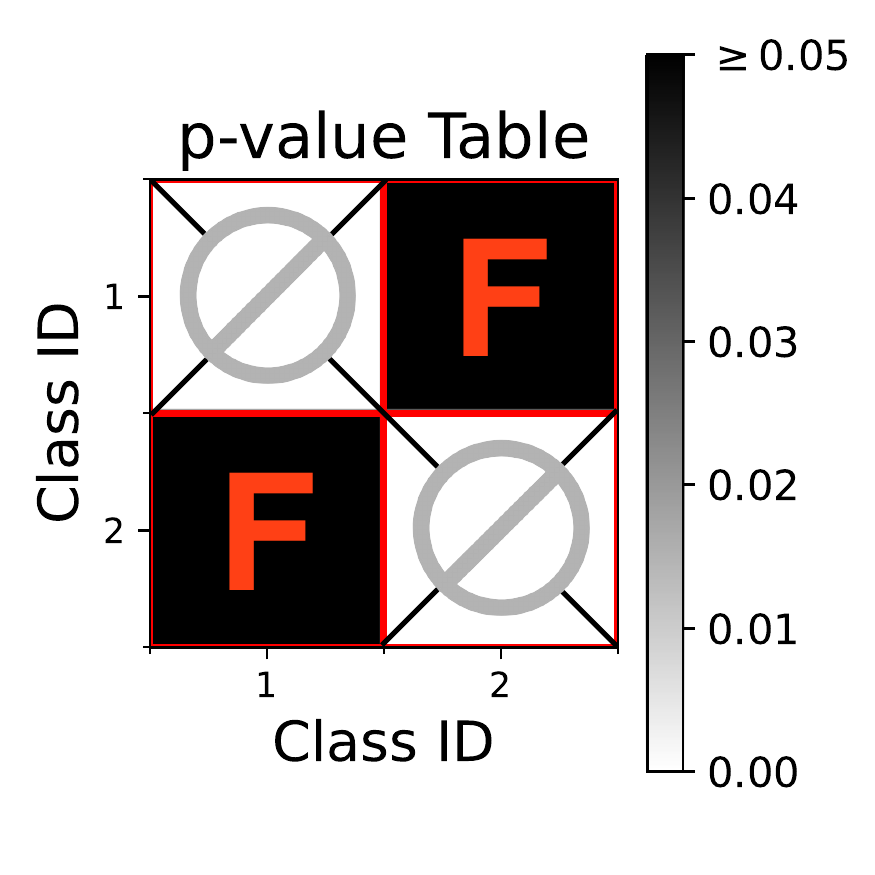}}
	 \subfloat[\label{fig:dis2} \scriptsize ``Penn94'': No \nef]
	{\includegraphics[height=0.83in]{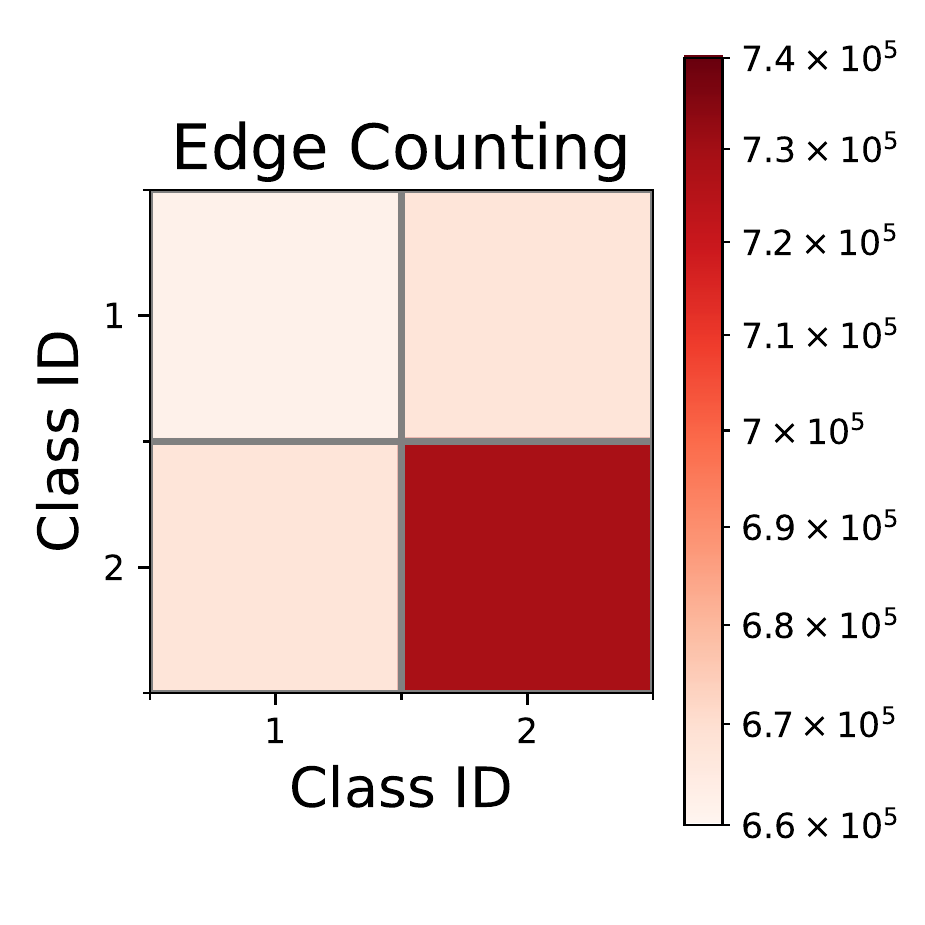}
     \hspace{-3mm}
	 \includegraphics[height=0.86in]{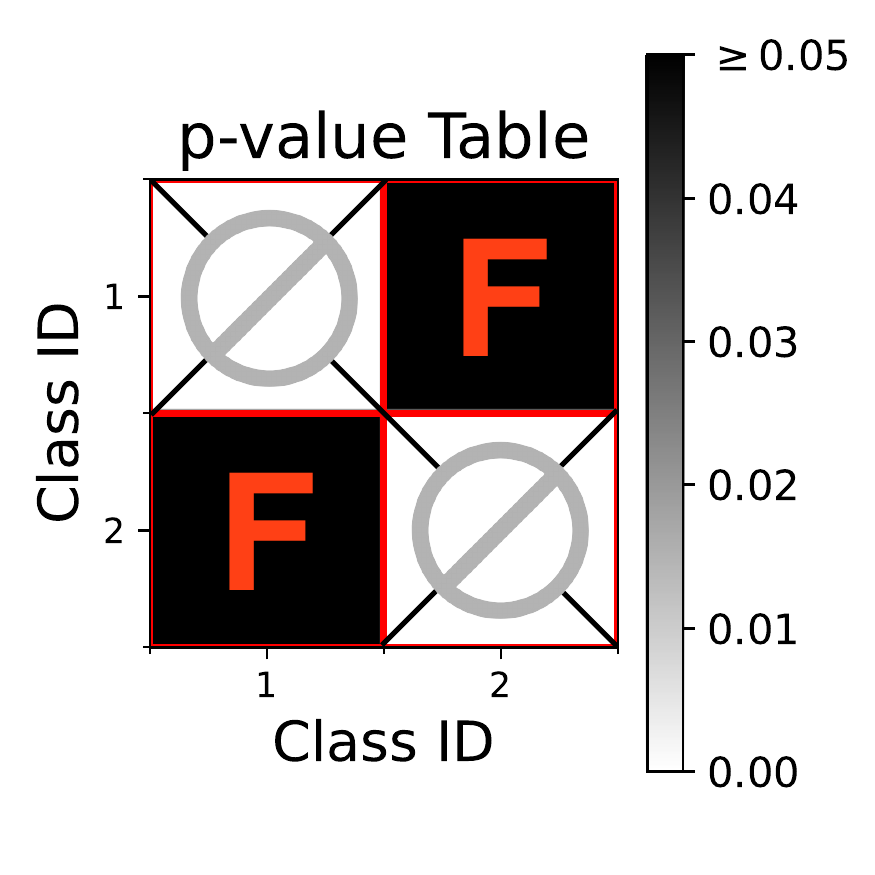}}
	 \subfloat[\label{fig:dis3} \scriptsize ``Twitch'': No \nef]
	{\includegraphics[height=0.83in]{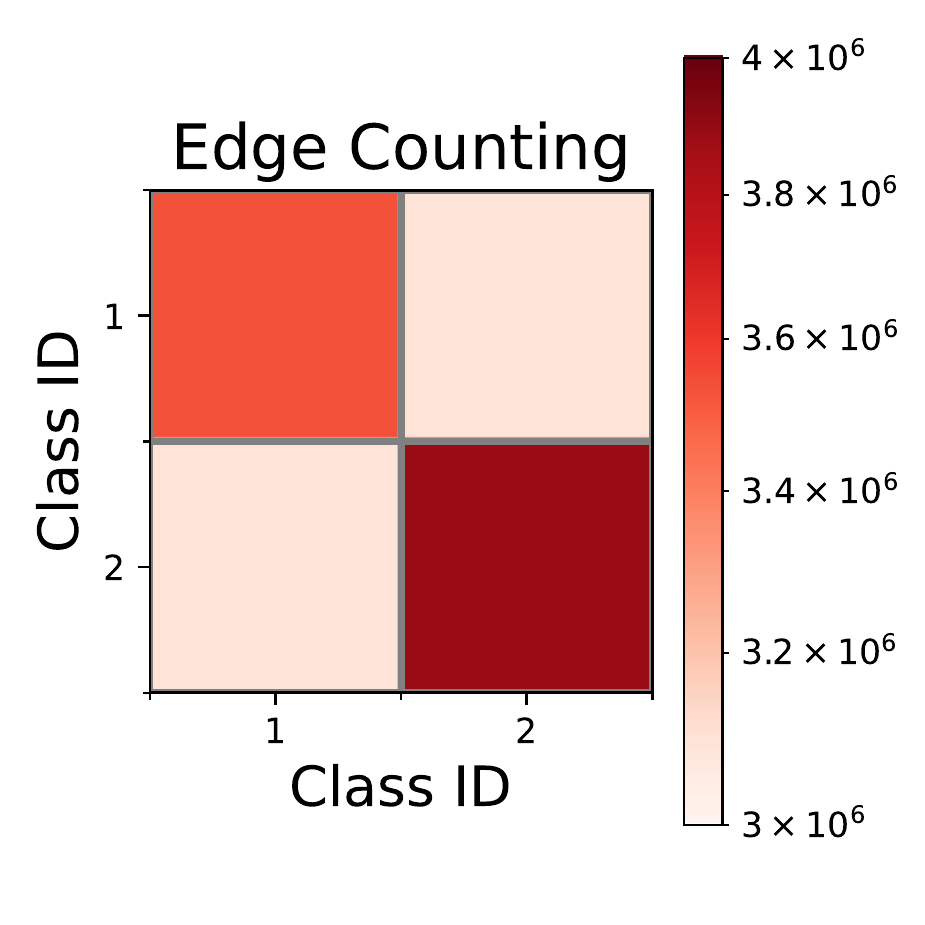}
     \hspace{-3mm}
	 \includegraphics[height=0.86in]{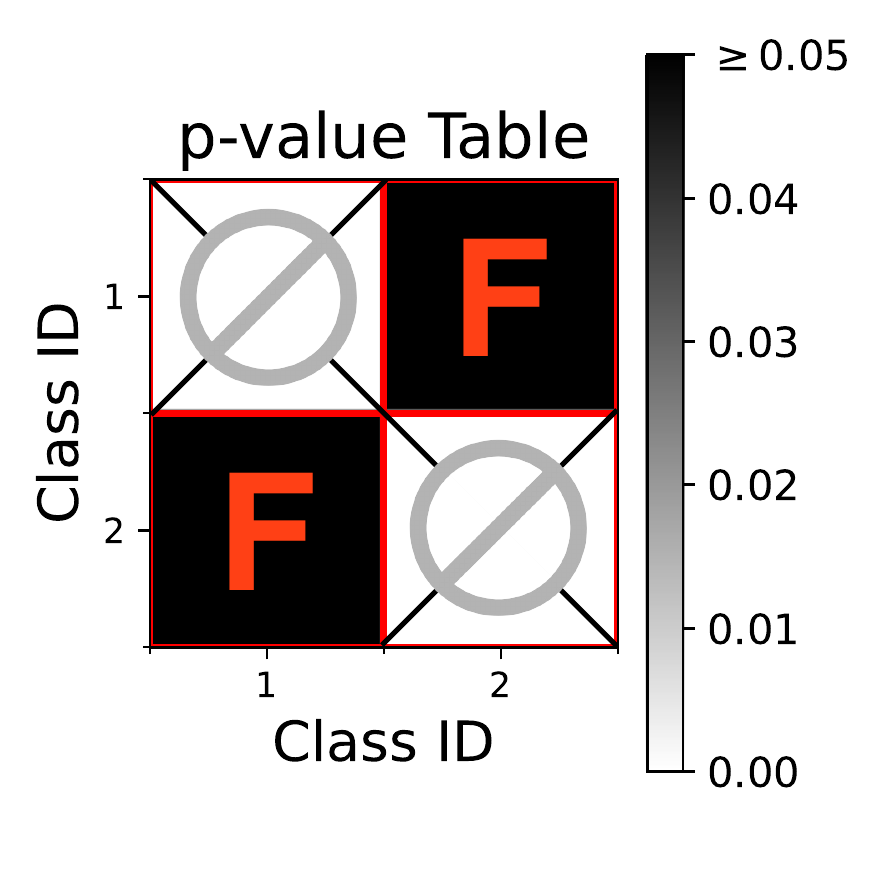}} \\
  \vspace{-2mm}
	 \subfloat[\label{fig:dis4} \scriptsize ``arXiv-Year'': Weak \nef]
	{\includegraphics[height=0.83in]{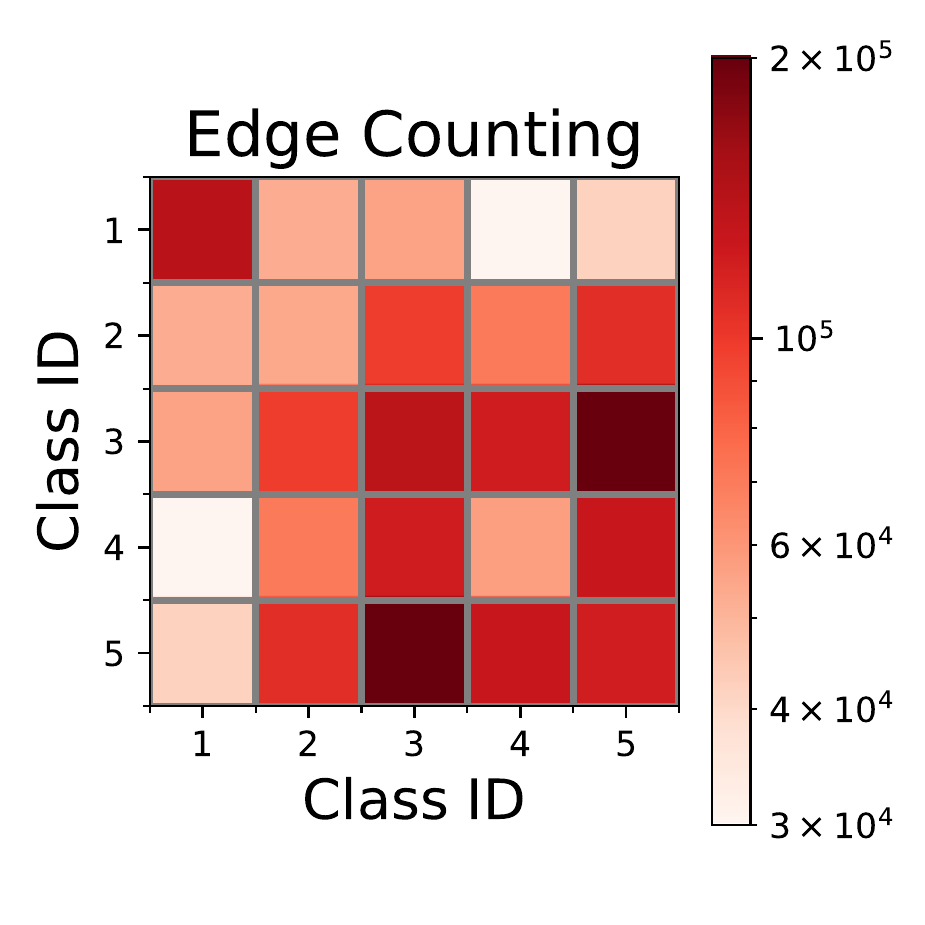}
     \hspace{-3mm}
	 \includegraphics[height=0.86in]{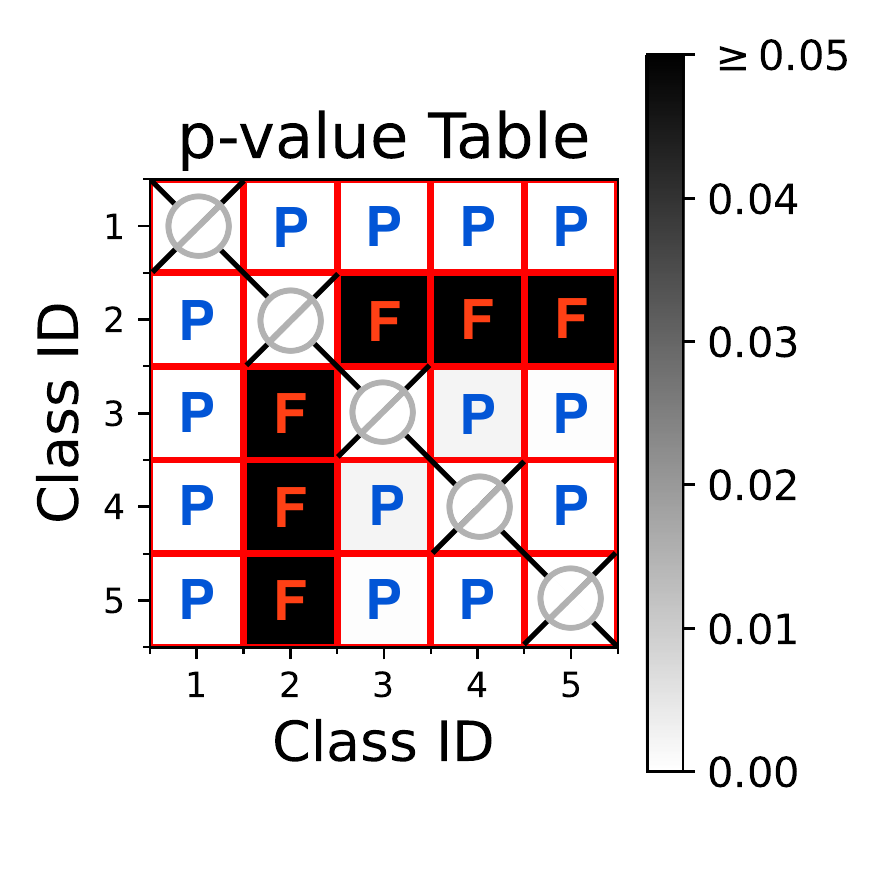}}
	 \subfloat[\label{fig:dis5} \scriptsize ``Patent-Year'': Weak \nef]
	{\includegraphics[height=0.83in]{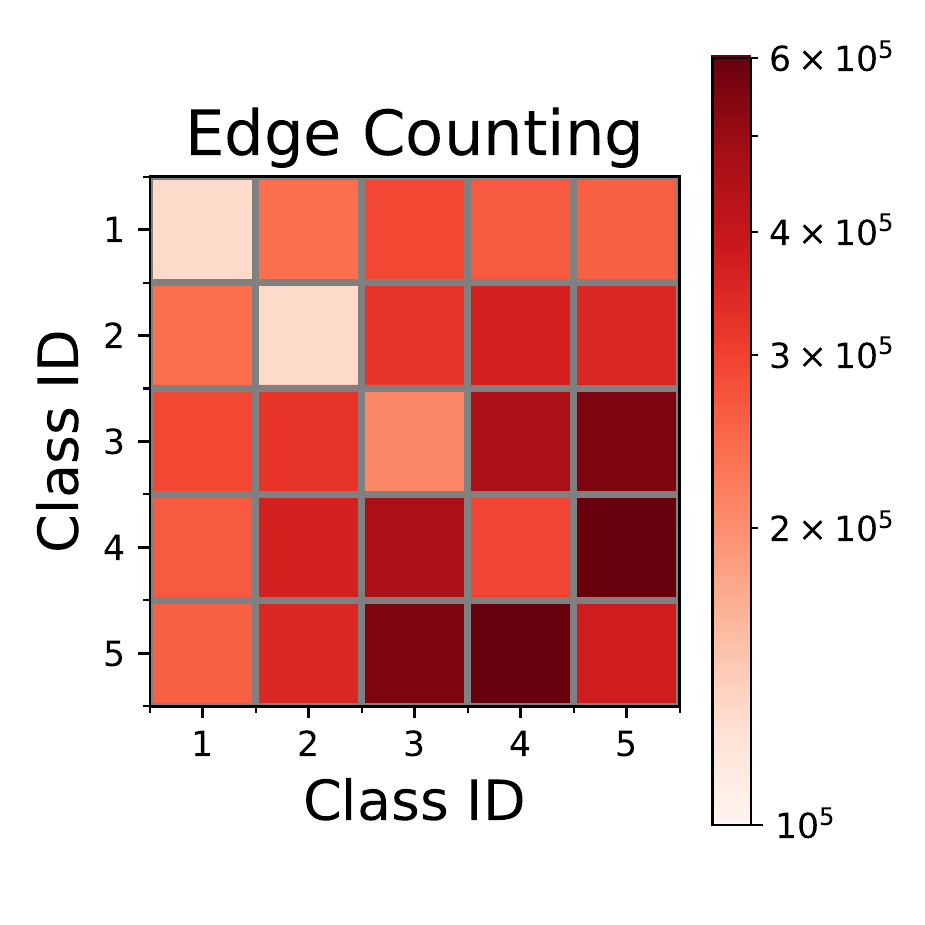}
     \hspace{-3mm}
	 \includegraphics[height=0.86in]{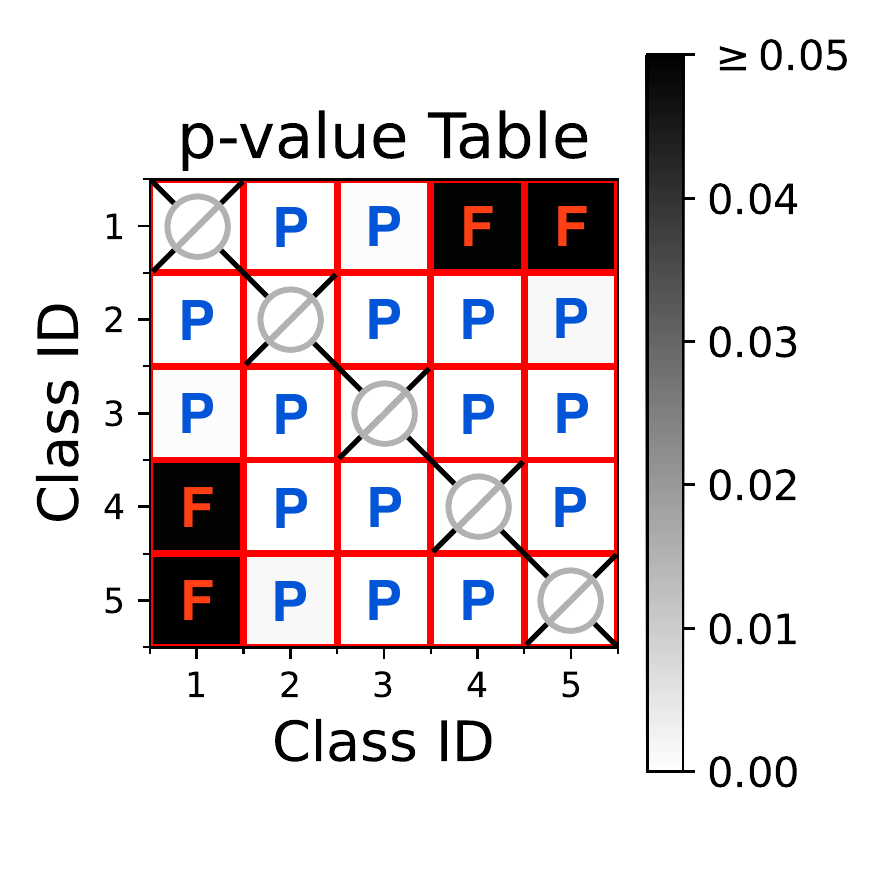}}
	 \subfloat[\label{fig:dis6} \scriptsize ``Pokec-Gender'': Strong \nef]
	{\includegraphics[height=0.83in]{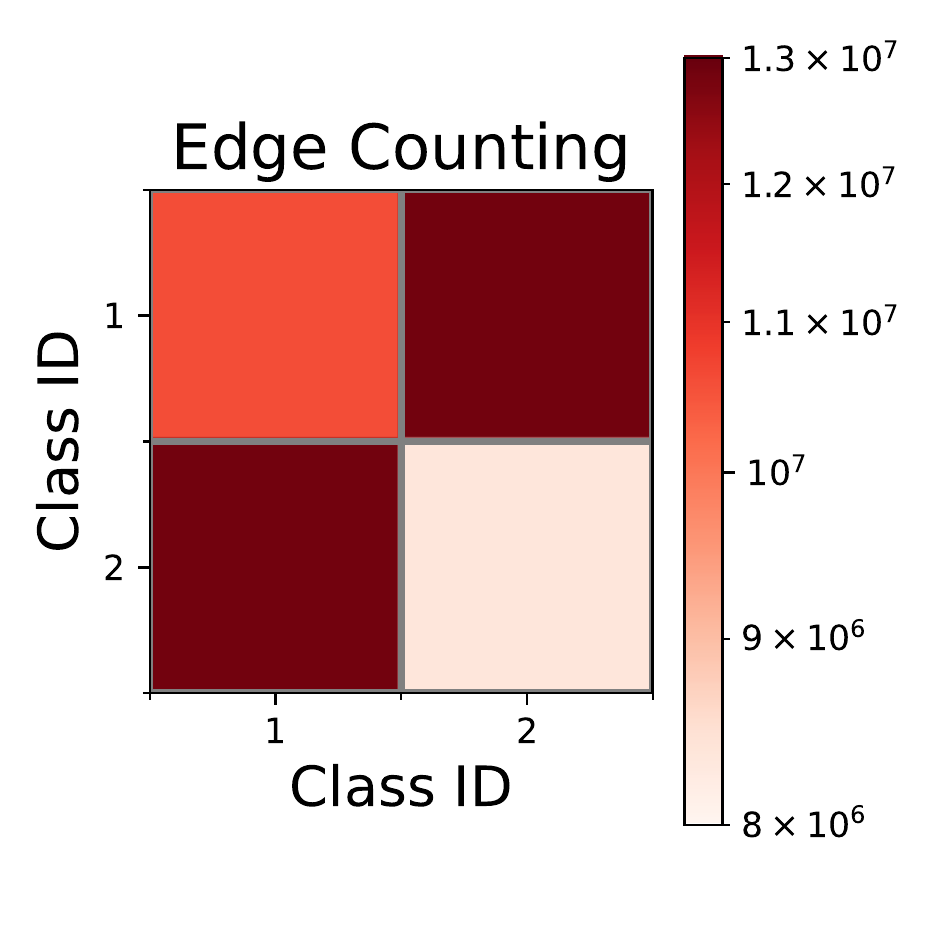}
     \hspace{-3mm}
	 \includegraphics[height=0.86in]{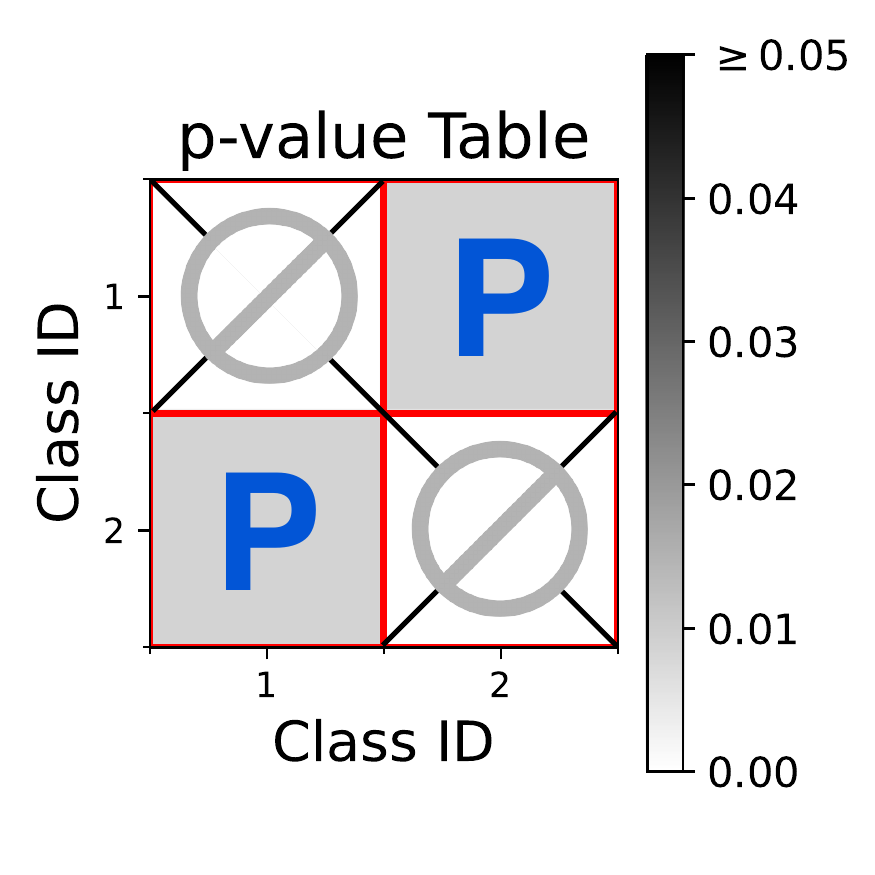}} \\
    \captionsetup[subfloat]{farskip=2pt,captionskip=0pt}
    \subfloat[\label{tab:homstat} Homophily statistics of graphs and their \nef.]
    {    
        \setlength{\tabcolsep}{1pt}
        \resizebox{0.9\columnwidth}{!}{
        \begin{tabular}{c | ccc | cc | cc}
        \hline
        \textbf{Datasets} & \textbf{Genius} & \textbf{Penn94} & \textbf{Twitch} & \textbf{Patent-Year} & \textbf{Pokec-Gender} & \textbf{arXiv-Year} & \textbf{Synthetic} \\
        \hline
        \# of Nodes / Edges / Classes & 422K / 985K / 2 & 42KM / 1.4M / 2 & 168K / 6.8M / 2 & 1.3M / 4.3M / 5 & 1.6M / 22.3M / 2 & 169K / 1.2M / 5 & 1.2M / 34.0M / 6 \\
        \hline
        Edge Homophily & 0.618 & 0.470 & 0.545 & 0.132 & 0.425 & 0.222 & 0.314 \\
        $\hat{h}$ & 0.080 & 0.046 & 0.090 & 0.000 & 0.000 & 0.272 & 0.245 \\
        \hline
        \nef & No \nef & No \nef & No \nef & Heterophily & Heterophily & \xophily & \xophily \\
        \hline
        \end{tabular}}
    }
    \vspace{-3mm}
	\caption{\label{fig:dis} \emphasize{\methodtest works}: It discovers that real-world heterophily graphs do not necessarily have \nef. For each graph, we report the edge counting on the left (not available in practice), and the $p$-value table output from \methodtest on the right, where \textcolor{blue}{``P''} denotes the presence of \nef, and \textcolor{red}{``F''} denotes the absence of \nef.
    }
    \vspace{-5mm}
\end{figure}

Given a graph with few labels, how can we identify whether the graph has \neteffect (\nef) or not?
In other words, how can we check whether the graph structure is useful for inferring node labels?
We propose \methodtest, a statistical approach to identify the presence of \nef in a graph.
Applying it to real-world graphs, we show that many popular heterophily graphs exhibit little \nef.

\subsection{\methodtest} \label{ssec:neanalysis}
\vspace{-2mm}
We first provide two main definitions regarding \nef:
\vspace{-2mm}
\begin{definition} \label{def:noGNEclass}
If the nodes with class $c_{i}$ in a graph tend to connect randomly to the nodes with all classes $1, \dots, c$ (with no specific preference), class $c_{i}$ has no \nef.
\end{definition}
\vspace{-4mm}
\begin{definition} \label{def:noGNEall}
If all classes in a graph have no \nef, this graph has no \nef.
\end{definition}

We distinguish heterophily graphs from those with no \nef by the definition.
In heterophily graphs, the nodes of a specific class are likely to be connected to the nodes of other classes, such as in bipartite graphs that connect different classes of nodes.
In this case, knowing the label of a node gives meaningful information about the labels of its neighbors.
On the other hand, if a graph has no \nef, knowing the label of a node gives no useful information about its neighbors.
In other words, the structural information of a graph is not useful to infer the unknown labels of nodes.

Next we describe how we propose to determine
the existence or absence of \nef.
In the inner loop, we need to decide whether class $c_i$ (say, ``talkative people''), has statistically more, or fewer edges
to class
$c_j$ (say, ``silent people'').
\hide{\tiny 
To identify whether a given class has \nef or not, we propose to test whether the class is distinguishable with other classes in the graph.
If it is not distinguishable, then it means the class has no \nef.}
We propose to use Pearson's $\chi^2$ test for that.
Specifically,
given a class pair $(c_i, c_j)$, the input to the test is a $2 \times 2$ contingency table containing the counts of edges that connect pairs of nodes whose labels are in $\{c_{i}, c_{j} \}$.
The null hypothesis of the test is:
\vspace{-2mm}
\begin{h0}
Edges are equally likely to exist between nodes of the same class and those of different classes.
\end{h0}
\vspace{-2mm}

\noindent
If the $p$-value from the test is no less than $0.05$, we accept the null hypothesis, which represents that the chosen class pair $(c_i, c_j)$ exhibits no statistically significant \nef in the graph.
Then we call them {\em mutually indistinguishable}:

\vspace{-2mm}
\begin{definition}[Mutually indistinguishable]
\label{def:mIndistinguishable}
Two classes $c_i$ and $c_j$ are {\em mutually indistinguishable}
if we can not reject the null hypothesis above.
\end{definition}
\vspace{-2mm}

\textbf{Novel Implementation Details.}
The detailed procedure of \methodtest is in Appx.~\ref{algo:fisher}.
A practical challenge on the test is that if the numbers in the table are too large, $p$-value becomes very small and meaningless \cite{lin2013research}.
Uniform edge sampling can be a natural solution, but sampling for only a single round can be unstable and output very different results.
To address this, we combine $p$-values from different random sampling by Universal Inference \cite{wasserman2020universal}.
We firstly sample edges to add to the contingency table until the frequency is above a specified threshold, and compute the $\chi^2$ test statistic for each class pair. 
Next, following Universal Inference, we repeat the procedure for random samples of edges for $B$ rounds and average the statistics.
At last, we use the average statistics to compute the $p$-value table ${\boldsymbol F}_{c \times c}$ of $\chi^2$ tests. 
Our \methodtest is robust to noisy edges thanks to the sampling process, and works well given either a few or many node labels.
Given a few observations, the $\chi^2$ test works well when the frequency in the contingency table is at least $5$;
given many observations, our sampling trick ensures the correctness and the consistency of the computed $p$-value.


If a class accepts the null hypotheses with all other classes, this class has little \nef, and satisfies Def.~\ref{def:noGNEclass}.
Moreover, if all classes exhibit little \nef, the whole graph satisfies Def.~\ref{def:noGNEall}.
In that case, no label propagation methods will help with node classification.



\hide{
\begin{prop} \label{obs:nf2}
If all classes in a graph obey Prop.~\ref{obs:nf}, then without extra information other than the given graph structure, exploiting \nef in node classification can not perform statistically better than random. 
\end{prop}
} 

\vspace{-5mm}
\subsection{Discoveries on Real-World Graphs} \label{ssec:discover}
\vspace{-2mm}
We apply \methodtest to $6$ real-world graphs and analyze their \nef.
For each dataset, we sample $5$\% of node labels and compute the $p$-value table using \methodtest.
This is because 
a) only few labels are available in most node classification tasks in practice, and thus it is reasonable to make the same assumption in the analysis, and 
b) \methodtest can analyze \nef even from partial observations.
$B$ is set to $1000$ to output stable results.
Based on Def.~\ref{def:noGNEall} 
our surprising discoveries are:\looseness=-1

\noindent\textbf{Discovery 1} (No \nef)
\textit{
\methodtest identifies the lack of \nef in ``Genius'' \cite{lim2020expertise}, ``Penn94'' \cite{traud2012social}, and ``Twitch'' \cite{rozemberczki2021twitch}.
}
They are widely known as heterophily graphs.
In ``Genius'' (Fig.~\ref{fig:dis1}), we see that both classes $1$ and $2$ tend to connect to class $1$, making class $2$ indistinguishable by the graph structure. 
\methodtest thus accepts the null hypothesis, and identifies the lack of \nef.
We can observe a similar phenomenon in ``Penn94'' (Fig.~\ref{fig:dis2}).
``Twitch'' (Fig.~\ref{fig:dis3}) used to be considered as a heterophily graph because of its weak homophily effect, but \methodtest finds that each of the classes uniformly connects to both classes, and thus it has little \nef.


\noindent\textbf{Discovery 2} (Heterophily and \xophily)
\textit{
\methodtest identifies \nef in ``Arxiv-Year'', ``Patent-Year'', and ``Pokec-Gender''.
}
While ``Patent-Year'' and ``Pokec-Gender'' exhibit heterophily (Fig.~\ref{fig:dis5} and \ref{fig:dis6}),
``Arxiv-Year'' exhibits \xophily, i.e., not straight homophily or heterophily (Fig.~\ref{fig:dis4}).
They are thus used in our experiments.

\noindent\textbf{Discovery 3} (Weak vs strong \nef)
\textit{
\methodtest identifies weak, and strong \nef:
``Arxiv-Year'' and ``Patent-Year'' exhibit weak \nef;
and ``Pokec-Gender'' exhibits strong \nef.
}
We consider graphs to have weak \nef if there exists at least one class which is not distinguishable from some other classes.
Such graphs limit the accuracy of node classification, compared with graphs with strong \nef (i.e., all classes have \nef), regardless of the specific method used for classification.

\noindent\textbf{Discussion of Homophily Statistics.}
In Fig.~\ref{tab:homstat}, we report two homophily statistics.
Edge homophily \cite{zhu2020beyond} is the edge ratio that connect two nodes with the same class, and $\hat{h}$ \cite{lim2021large} is an improved metric which is insensitive to the class number and size.
We find even using all labels, they are not enough to capture the interrelations of all class pairs in detail, and the graphs with low homophily statistics are not guaranteed to be heterophily.
They can only detect the absence of homophily, instead of distinguishing different non-homophily cases, including heterophily, \xophily, and no \nef.
In contrast, our \methodtest identifies whether the graph exhibits \nef or not from only a few labels.\looseness=-1

\vspace{-2mm}
\section{Proposed \nef Estimation} \label{sec:neest}
\vspace{-2mm}


Given that a graph exhibits \nef, how can we estimate the all-pair relations between classes?
A \emph{compatibility matrix} is a natural strategy to describe the relations, which has been widely used in the literature.
We propose \methodest, which turns the compatibility matrix estimation into an optimization problem based on a closed-form formula.
\methodest not only overcomes the limitation of naive edge counting, but is also robust to noisy observations even with few observed labels.

\vspace{-3mm}
\subsection{Why NOT Edge Counting}
\vspace{-1mm}
The graph in Fig.~\ref{fig:ecexgt} exhibits heterophily between class pairs $(1, 2)$ and $(3, 4)$, while it exhibits homophily in classes $5$ and $6$.
A compatibility matrix is commonly used in existing studies, but assumed given by domain experts, instead of being estimated.
A naive way to estimate it is via counting labeled edges, but it has two limitations: 
1) rare labels are neglected, and 
2) it is noisy or biased due to few labeled nodes.
The result is even more unreliable if the given labels are imbalanced.
In Fig.~\ref{fig:ecex}, we upsample the training labels $10\times$ for class $1$ using the graph in Fig.~\ref{fig:c2}.
Edge counting in Fig.~\ref{fig:ecex1} biases towards the upsampled class and clearly fails to estimate the correct compatibility matrix in Fig.~\ref{fig:ecexgt}, while our proposed \methodest succeeds in Fig.~\ref{fig:ecex2}.
This commonly occurs in practice, since we observe only limited labels, and becomes fatal if the observed distribution is different from the true one.

\vspace{-3mm}
\subsection{Closed-Form Formula} \label{ssec:comp}
\vspace{-1mm}
We begin the derivation by rewriting Eqn.~\ref{eq:prop} of BP.
The main insight is reminiscent of `leave-one-out' cross validation.
That is, we find $\hat{{\boldsymbol H}}$ that would make the results of the propagation (RHS of Eqn.~\ref{eq:simple})
to the actual values (LHS of Eqn.~\ref{eq:simple}):
\vspace{-2mm}
\begin{equation} \label{eq:simple}
\underbrace{\hat{{\boldsymbol E}}}_\text{reality} \approx
\underbrace{
{\boldsymbol A}\hat{{\boldsymbol E}}\hat{{\boldsymbol H}} }_\text{estimate}
\vspace{-2mm}
\end{equation}

\hide{
As an essential component, the compatibility matrix is used by BP for proper propagation.
Therefore, we begin the derivation by simplifying Eqn.~\ref{eq:prop} of BP.
If we initialize $\hat{{\boldsymbol B}}$ with $\hat{{\boldsymbol E}}$, and omit the addition of $\hat{{\boldsymbol E}}$ for the iterative propagation purpose, we have:
\begin{equation} \label{eq:simple_old}
\hat{{\boldsymbol B}} = {\boldsymbol A}\hat{{\boldsymbol E}}\hat{{\boldsymbol H}}
\end{equation}
}

\hide{
Our goal is to estimate the compatibility matrix $\hat{{\boldsymbol H}}$ of a given graph, so that the difference between belief propagated by the given priors $\mathcal{P}$ and the final belief is minimized.
Nevertheless, the final belief $\hat{{\boldsymbol B}}$ is not available before we run the propagation on the graph.
To address that, we mimic the leave-one-out cross-validation, by speculating the label of the node through its labeled neighbors, which turns Eqn.~\ref{eq:simple} into $\hat{{\boldsymbol E}} \approx {\boldsymbol A}\hat{{\boldsymbol E}}\hat{{\boldsymbol H}}$.
In other words, we aim to minimize the difference between initial belief of each node $i \in \mathcal{P}$ by the initial beliefs of its neighbors $N(i) \in \mathcal{P}$, i.e., $N(i) \cap \mathcal{P}$. 
Intuitively, the neighbors are able to estimate the belief for the node.
}

\noindent 
Formally, we want to minimize the difference between the reality and the estimate:
\vspace{-3mm}
\begin{equation} \label{eq:opt}
    \min_{\hat{\boldsymbol H}}\sum_{i \in \mathcal{P}}{\sum_{u=1}^{c} \|{\hat{\boldsymbol E}_{iu}} - \sum_{k=1}^{c}{\sum_{j \in N(i) \cap \mathcal{P}}{\hat{\boldsymbol E}_{jk}}\hat{\boldsymbol H}_{ku}}}\|^2,
\vspace{-2mm}
\end{equation}
where $N(i)$ denotes the neighbors of node $i$. 
In other words, we aim to minimize the difference between initial belief $\hat{\boldsymbol E}$ of each node $i \in \mathcal{P}$ by the ones of its neighbors $N(i) \in \mathcal{P}$, i.e., $N(i) \cap \mathcal{P}$. 
To estimate the compatibility matrix $\hat{{\boldsymbol H}}$, we solve the optimization problem in Eqn.~\ref{eq:opt} with the proposed closed-form formula:
\begin{lemma}[\NEF (NEF)] \label{lem:nef}
Given adjacency matrix ${\boldsymbol A}$ and initial beliefs $\hat{{\boldsymbol E}}$, the closed-form solution of vectorized compatibility matrix $\text{vec}{(\hat{{\boldsymbol H}})}$ is:
\vspace{-2mm}
\begin{equation}
\boxed{
    \text{vec}{(\hat{{\boldsymbol H}})} = ({\boldsymbol X}^{T}{\boldsymbol X})^{-1}{\boldsymbol X}^{T}{\boldsymbol y}
    }
\vspace{-2mm}
\end{equation}
where ${\boldsymbol X} = {\boldsymbol I}_{c \times c} \otimes ({\boldsymbol A}\hat{{\boldsymbol E}})$ and ${\boldsymbol y} = \text{vec}{(\hat{{\boldsymbol E}})}$.
\end{lemma}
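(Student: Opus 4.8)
The plan is to recognize the objective in Eqn.~\ref{eq:opt} as an ordinary linear least-squares problem once it has been written in matrix form and vectorized, and then to read the solution off the normal equations. First I would put the double sum into matrix notation. Since $\hat{{\boldsymbol E}}$ is the \emph{residual} belief matrix, its rows for unlabeled nodes vanish, so for any labeled $i$ we have $\sum_{j \in N(i)\cap\mathcal{P}}\hat{{\boldsymbol E}}_{jk} = \sum_{j}{\boldsymbol A}_{ij}\hat{{\boldsymbol E}}_{jk} = ({\boldsymbol A}\hat{{\boldsymbol E}})_{ik}$, and therefore the inner sum over $k$ and $j$ in Eqn.~\ref{eq:opt} equals $({\boldsymbol A}\hat{{\boldsymbol E}}\hat{{\boldsymbol H}})_{iu}$. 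Let ${\boldsymbol P}$ be the $0/1$ diagonal matrix selecting the rows in $\mathcal{P}$ and set ${\boldsymbol M} := {\boldsymbol P}{\boldsymbol A}\hat{{\boldsymbol E}}$ (this is exactly what the statement writes as ${\boldsymbol A}\hat{{\boldsymbol E}}$, i.e., restricted to labeled rows). Because ${\boldsymbol P}\hat{{\boldsymbol E}} = \hat{{\boldsymbol E}}$, the objective is precisely $\|\hat{{\boldsymbol E}} - {\boldsymbol M}\hat{{\boldsymbol H}}\|_F^2$, which is the matrix version of the ``leave-one-out'' relation $\hat{{\boldsymbol E}} \approx {\boldsymbol A}\hat{{\boldsymbol E}}\hat{{\boldsymbol H}}$.

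Next I would vectorize. Using the Kronecker identity $\text{vec}({\boldsymbol M}\hat{{\boldsymbol H}}) = \text{vec}({\boldsymbol M}\hat{{\boldsymbol H}}\,{\boldsymbol I}_{c\times c}) = ({\boldsymbol I}_{c\times c}\otimes{\boldsymbol M})\,\text{vec}(\hat{{\boldsymbol H}})$ together with $\|{\boldsymbol Z}\|_F^2 = \|\text{vec}({\boldsymbol Z})\|_2^2$, the problem in Eqn.~\ref{eq:opt} becomes
\[
\min_{\hat{{\boldsymbol H}}} \| {\boldsymbol y} - {\boldsymbol X}\,\text{vec}(\hat{{\boldsymbol H}}) \|_2^2,
\quad\text{where}\quad {\boldsymbol X} = {\boldsymbol I}_{c\times c}\otimes({\boldsymbol A}\hat{{\boldsymbol E}}),\ \ {\boldsymbol y} = \text{vec}(\hat{{\boldsymbol E}}).
\]
This is a convex quadratic in $\text{vec}(\hat{{\boldsymbol H}})$, so setting its gradient $2{\boldsymbol X}^{T}({\boldsymbol X}\,\text{vec}(\hat{{\boldsymbol H}}) - {\boldsymbol y})$ to zero yields the normal equations ${\boldsymbol X}^{T}{\boldsymbol X}\,\text{vec}(\hat{{\boldsymbol H}}) = {\boldsymbol X}^{T}{\boldsymbol y}$, hence $\text{vec}(\hat{{\boldsymbol H}}) = ({\boldsymbol X}^{T}{\boldsymbol X})^{-1}{\boldsymbol X}^{T}{\boldsymbol y}$, which is the claimed formula; convexity guarantees this stationary point is the global minimizer.

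The one step that needs care is the invertibility of ${\boldsymbol X}^{T}{\boldsymbol X}$. Since $({\boldsymbol I}_{c\times c}\otimes{\boldsymbol M})^{T}({\boldsymbol I}_{c\times c}\otimes{\boldsymbol M}) = {\boldsymbol I}_{c\times c}\otimes({\boldsymbol M}^{T}{\boldsymbol M})$, this matrix is nonsingular iff ${\boldsymbol M}^{T}{\boldsymbol M}$ is, i.e., iff ${\boldsymbol A}\hat{{\boldsymbol E}}$ restricted to the labeled rows has full column rank $c$ --- intuitively, that the labeled nodes and their labeled neighborhoods are ``rich enough'' to identify all $c$ columns of $\hat{{\boldsymbol H}}$. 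This holds generically once a modest number of labels per class is observed; otherwise one replaces the inverse by the Moore--Penrose pseudoinverse. I expect this rank condition to be the only real subtlety; the rest is a routine vectorization-plus-normal-equations computation. As a sanity check I would note that the block-diagonal form of ${\boldsymbol X}^{T}{\boldsymbol X}$ makes the solution decouple into the $c$ independent per-column least-squares problems $\min_{\hat{{\boldsymbol H}}_{\cdot u}}\|\hat{{\boldsymbol E}}_{\cdot u} - ({\boldsymbol A}\hat{{\boldsymbol E}})\hat{{\boldsymbol H}}_{\cdot u}\|_2^2$, whose stacked solution is exactly the boxed expression.
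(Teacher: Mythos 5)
Your proposal is correct and follows essentially the same route as the paper's proof: vectorize the relation $\hat{{\boldsymbol E}} \approx {\boldsymbol A}\hat{{\boldsymbol E}}\hat{{\boldsymbol H}}$ via the mixed Kronecker matrix--vector product identity $\text{vec}({\boldsymbol M}\hat{{\boldsymbol H}}{\boldsymbol I}) = ({\boldsymbol I}\otimes{\boldsymbol M})\text{vec}(\hat{{\boldsymbol H}})$ and read off the ordinary least-squares solution. Your added remarks on the invertibility of ${\boldsymbol X}^{T}{\boldsymbol X}$ (reducing to the column rank of ${\boldsymbol A}\hat{{\boldsymbol E}}$ on the labeled rows) and on the decoupling into $c$ per-column regressions are correct and go slightly beyond the paper's terser argument, which simply invokes the closed form of linear regression.
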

\vspace{-2mm}
\begin{proof}
See Appx.~\ref{ap:subsec:proofprop1}.
\hfill$\blacksquare$
\end{proof}


\begin{figure}[t]
\begin{minipage}[] {0.55\linewidth}
    \centering
    \subfloat[\label{fig:ecexgt} \scriptsize Ground Truth]
    {\includegraphics[height=0.9in]{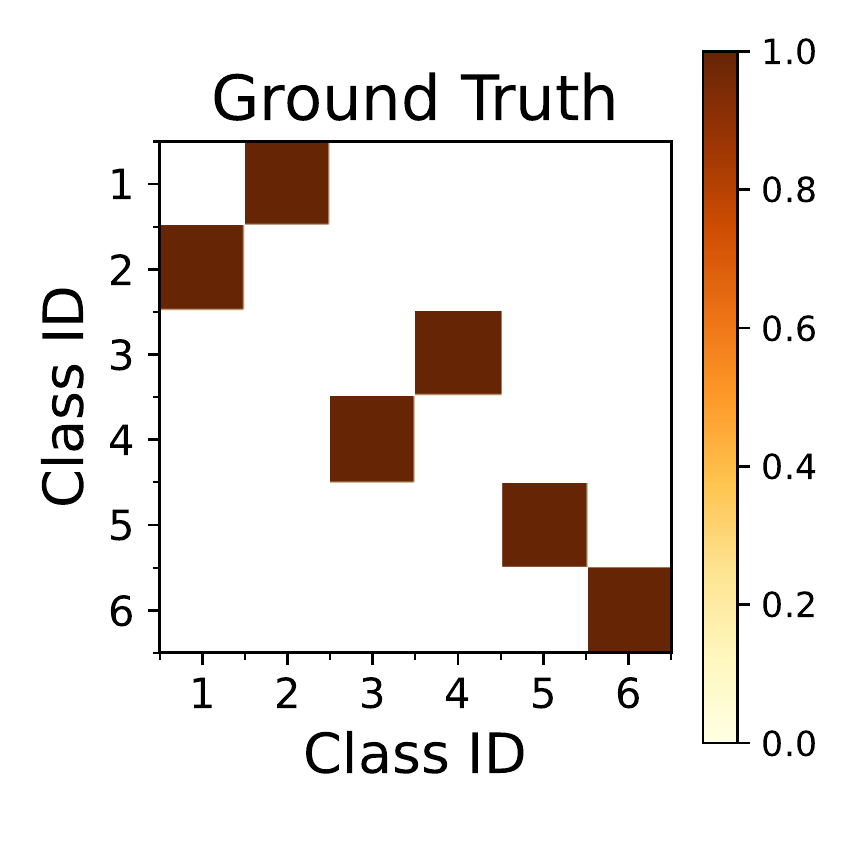}}
    \subfloat[\label{fig:ecex1} \scriptsize Edge Counting]
    {\includegraphics[height=0.9in]{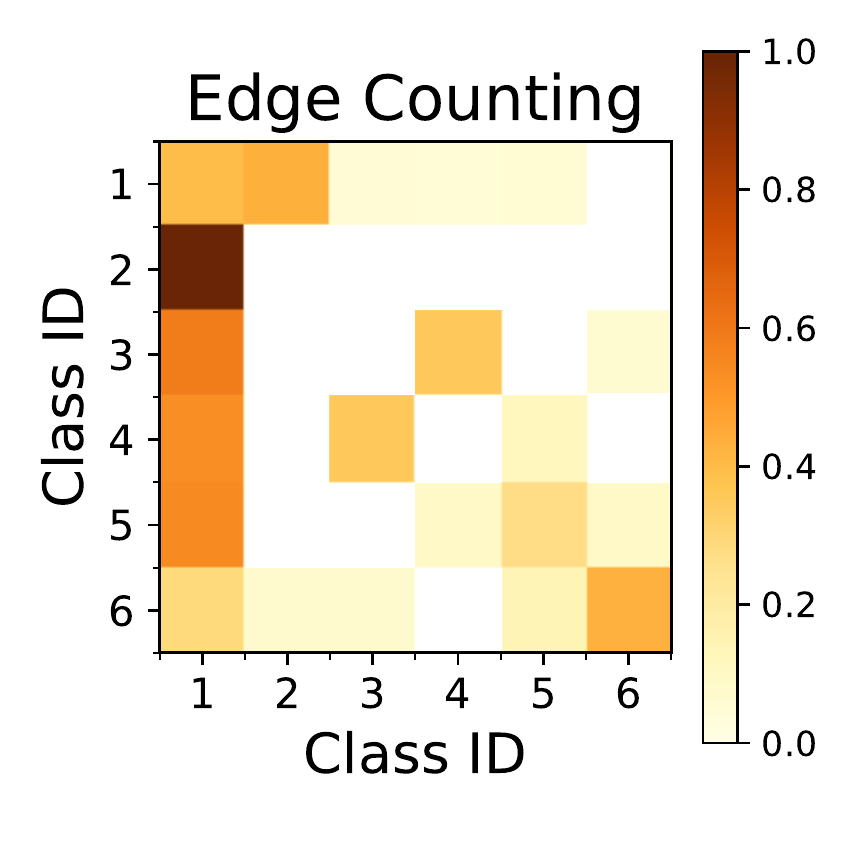}}
    \subfloat[\label{fig:ecex2} \scriptsize \methodest]
    {\includegraphics[height=0.9in]{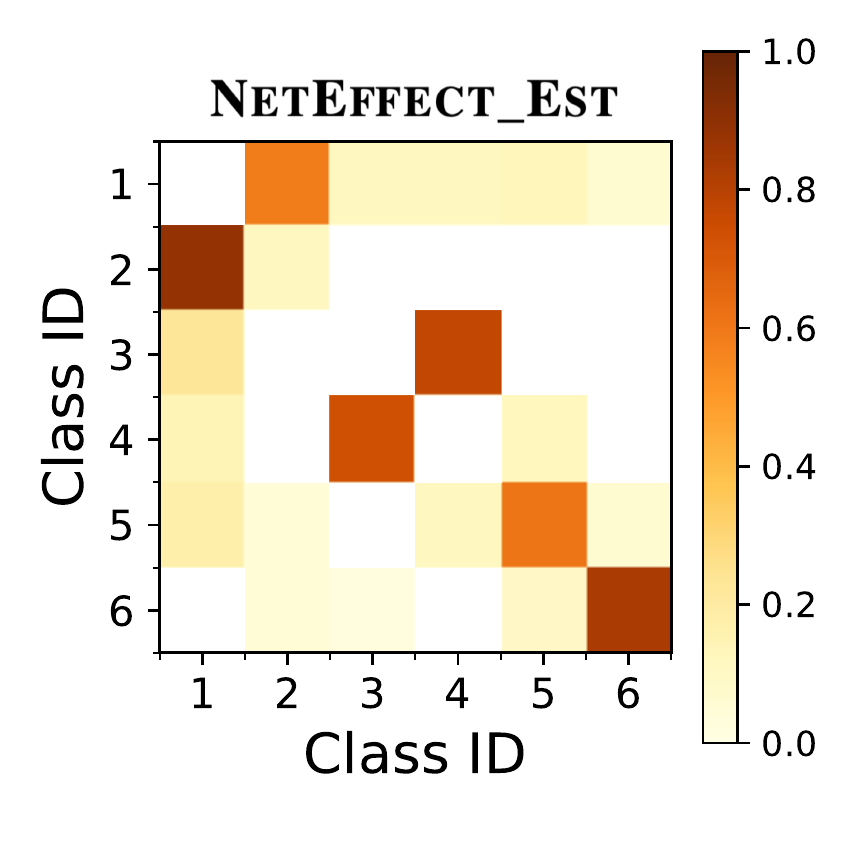}}
    \vspace{-3mm}
    \caption{\label{fig:ecex} \emphasize{\methodest handles imbalanced} \emphasize{case well}. Labels of class $1$ is upsampled.}
\end{minipage} 
\hfill
\begin{minipage}[] {0.42\linewidth}
    \centering
    \subfloat[\label{fig:em1} \scriptsize Adj. Matrix]
    {\includegraphics[height=0.86in]{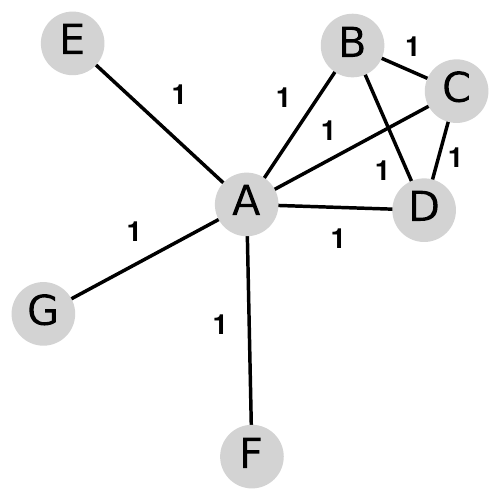}}
    \subfloat[\label{fig:em2} \scriptsize Emphasis Matrix]
    {\includegraphics[height=0.86in]{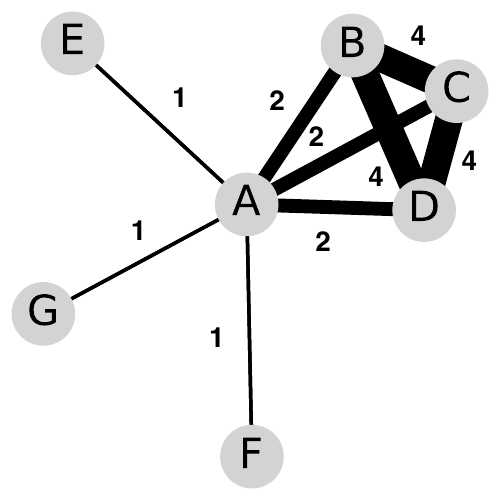}} 
    \vspace{-3mm}
    \caption{\label{fig:ematrix} \emphasize{Emphasis matrix at work}: it prefers well-connected neighbors.}
\end{minipage}
\vspace{-6mm}
\end{figure}

\subsection{\methodest}
\vspace{-3mm}
The algorithm is presented in Alg.~\ref{algo:cmest}.
In practice, we can use any form of adjacency matrix for the estimation.
The proposed NEF allows us to estimate the compatibility matrix by solving this optimization problem, but there still exists a practical challenge that need to be addressed.
With few labels, it is difficult to properly separate them into training and validation sets for the regression, and the estimation can easily be interfered by the noisy observations.
We thus use ridge regression with leave-one-out cross-validation (RidgeCV) instead of the regular linear regression.
This allows us to fully utilize the observations without having biases caused by random splits of training and validation sets.
Moreover, the regularization effect of RidgeCV makes the compatibility matrix more robust to noisy observations.
It is noteworthy that its computational cost is negligible.\looseness=-1




\scalebox{0.75}{
\begin{minipage}{0.8\linewidth}
\begin{algorithm}[H]
\KwData{Adjacency matrix ${\boldsymbol A}$, initial belief $\hat{{\boldsymbol E}}$, and priors $\mathcal{P}$}
\KwResult{Estimated compatibility matrix $\hat{{\boldsymbol H}}$}
${\boldsymbol X} \leftarrow {\boldsymbol I}_{c \times c} \otimes ({\boldsymbol A}\hat{{\boldsymbol E}})$\textcolor{blue}{\tcp*{feature matrix}}
${\boldsymbol y} \leftarrow \text{vec}{(\hat{{\boldsymbol E}})}$\textcolor{blue}{\tcp*{target vector}}
Extract indices ${\boldsymbol i}$ with nodes in priors $\mathcal{P}$\;
$\hat{{\boldsymbol H}} \leftarrow RidgeCV({\boldsymbol X}[{\boldsymbol i}], {\boldsymbol y}[{\boldsymbol i}])$\;
Return $\hat{\boldsymbol H}$\;
\caption{\methodest \label{algo:cmest}}
\end{algorithm}
\end{minipage}
}

\vspace{-4mm}
\section{Proposed \nef Exploitation} \label{sec:neexp}
\vspace{-3mm}





We propose \methodexp to exploit \nef for accurate and fast node classification with few labels.
With few labels, it becomes crucial to better utilizing the graph structure.
First, we address this by paying attention to influential neighbors by the proposed \emphasis;
and then describe \methodexp with theoretical analysis.\looseness=-1

\vspace{-5mm}
\subsection{``Emphasis'' Matrix} \label{ssec:att}
\vspace{-2mm}
\textbf{Rationale and Overview.}
With few priors, we propose to better utilize the graph structure, by paying attention to only the most important part of it.
That is to say, not all neighbors are equally influential:
In Fig.~\ref{fig:ematrix}, best practice shows that well-connected neighbors (i.e., nodes `B', `C', and `D') have more influence on node `A' than the rest.
Thus, 
we propose \emphasis ${\boldsymbol A}^{*}$, 
to pay attention to such neighbors.
\methodest can also benefit from it by replacing ${\boldsymbol A}$ with ${\boldsymbol A}^{*}$, where we denote the improved compatibility matrix as $\hat{\boldsymbol{H}}^{*}$.
Alg.~\ref{algo:rw} shows the details.
In short, it has $3$ steps:\looseness=-1
\ben
\item {\bf Favors influential neighbors} by quickly approximating the node-to-node proximity using (non-backtracking) random walks with restarts (lines $2$-$5$);
\item {\bf Touches-up} the new node-proximity by applying a series of transformations (including the best-practice element-wise logarithm) on the proximity matrix (line $6$);
\item {\bf Symmetrizes and weighs} the adjacency matrix with structural-aware embedding (lines $7$-$8$), giving higher weights to neighbors with closer embeddings (line $9$).
\een

\scalebox{0.75}{
\begin{minipage}{1.06\linewidth}
\begin{algorithm}[H]
\KwData{Adjacency matrix ${\boldsymbol A}$, number of trials $M$, number of steps $L$, and dimension $d$}
\KwResult{Emphasis matrix ${\boldsymbol A}^{*}$}
${\boldsymbol W}^{\prime} \leftarrow {\boldsymbol O}_{n \times n}$\;
\tcc{approximate proximity matrix by random walk}
\For{node $i$ in $G$}{
    \For{$m = 1, ..., M$}{
        \For{$j \in \mathcal{W}_{m}(i, L)$}{
            ${\boldsymbol W}^{\prime}_{ij} \leftarrow {\boldsymbol W}^{\prime}_{ij} + 1$\;
        }
    }
}
\tcc{masking, degree normalization and logarithm}
${\boldsymbol W}_{n \times n} \leftarrow \log{({\boldsymbol D}^{-1}({\boldsymbol W}^{\prime} \odot {\boldsymbol A}))}$\; 
${\boldsymbol U}_{n \times d}, {\boldsymbol \Sigma}_{d \times d}, {\boldsymbol V}_{d \times n}^{T} \leftarrow \text{SVD}({\boldsymbol W}, d)$\tcp*{embedding}
${\boldsymbol U} \leftarrow \sqrt{{\boldsymbol \Sigma}}{\boldsymbol U}$\tcp*{scaling}
\tcc{boost weights of close-embedded neighbors}
Weigh ${\boldsymbol A}^{*}_{n \times n}$, where ${\boldsymbol A}^{*}_{ij} = \mathcal{S}({\boldsymbol U}_{i}, {\boldsymbol U}_{j}), \forall \{i, j | {\boldsymbol A}_{ij} = 1\}$\;
Return ${\boldsymbol A}^{*}$\;
\caption{``Emphasis'' Matrix \label{algo:rw}}
\end{algorithm}
\end{minipage}
}

\hide{
which we describe in a flashback way to give a better train of thought.
The main idea is that the neighbors of a node are more influential if they are more structurally similar to that node.
In line $12$, we construct the \emphasis by giving larger weights to edges between similar nodes in the embedding space.
The node representations can be obtained by decomposing a higher-order proximity matrix, but it is usually dense and may be biased towards nodes with large degrees.
To address these issues, we derive node representations by (a) approximating the higher-order proximity matrix by random walk (in line $1$-$8$) and (b) applying a series of transformations on the resulting proximity matrix.
}

\vspace{2mm}
\noindent\textbf{Proximity Matrix Approximation.}
We propose to utilize random walks to approximate the proximity matrix.
The approximated proximity matrix ${\boldsymbol W}^{\prime}_{ij}$ records the times we visit node $j$ if we start a random walk from node $i$.
Only the well-connected neighbors will be visited more often.
\hide{
In practice, it is common that a node may have many neighbors, but most of them provide useless information for inferring its label.
To address this and approximate ${\boldsymbol W}$ in a fast way, we utilize random walks.
Given an approximated proximity matrix ${\boldsymbol W}^{\prime}$, ${\boldsymbol W}^{\prime}_{ij}$ records the number of times we visit node $j$ if we start a random walk from node $i$. 
Each neighbor has the same probability of being visited, but only those structurally important ones are visited more frequently.
} 
We theoretically show that it converges quickly:\looseness=-1
\vspace{-1mm}
\begin{lemma} [Convergence of Random Walks] \label{lem:crw1}
With probability $1 - \delta$, the error $\epsilon$ between the approximated and true distributions for a node walking to its 1-hop neighbor by random walks of length $L$ with $M$ trials is no greater than
$\frac{\lceil (L - 1) / 2 \rceil}{L} \sqrt{\frac{\log{(2/\delta)}}{2LM}}$.
\end{lemma}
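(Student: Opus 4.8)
The statement is a Monte-Carlo concentration bound. Write $\hat p_{ij} := W'_{ij}/(LM)$ for the empirical visit frequency from a node $i$ to a $1$-hop neighbor $j$ produced by the random walks of Alg.~\ref{algo:rw} ($W'_{ij}$ being the accumulated visit count), and let $p_{ij} := \mathbb{E}[\hat p_{ij}]$ be the ``true'' frequency. The goal is to show $|\hat p_{ij} - p_{ij}| \le \epsilon$ with probability at least $1-\delta$ for the stated $\epsilon$. The plan is: (i) express $\hat p_{ij}$ as an average of the $LM$ step-observations, so that $LM$ is the effective sample size driving the $\sqrt{1/(LM)}$ rate; (ii) bound, using the non-backtracking structure, how often a single walk of length $L$ can hit $j$, which supplies the prefactor $\lceil(L-1)/2\rceil/L$; and (iii) invoke a Hoeffding/Bernstein tail bound and solve for $\epsilon$.

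\textbf{Step 1 (estimator as an average).} Fix $i$ and $j\in N(i)$. For trial $m\le M$ and step $t\le L$, let $Z_{m,t}\in\{0,1\}$ indicate that the $m$-th walk sits at $j$ after $t$ steps, and set $C_m := \sum_{t=1}^{L} Z_{m,t}$. Then $W'_{ij} = \sum_{m,t} Z_{m,t} = \sum_m C_m$, the $M$ walks (hence the $C_m$) are i.i.d., and $\mathbb{E}[\hat p_{ij}] = p_{ij}$ by linearity; so $\hat p_{ij}$ is an average of $LM$ many $\{0,1\}$-valued observations.

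\textbf{Step 2 (a ceiling from non-backtracking).} Use that the walk is non-backtracking and that a restart re-anchors it at $i$: after the forced first step away from $i$, each further return to the immediate neighborhood of $i$ without backtracking costs at least two steps, so among the $L$ steps at most $R := \lceil(L-1)/2\rceil$ can land on $j$. Hence $C_m \in \{0,\dots,R\}$, the per-trial estimate $C_m/L$ lies in $[0,R/L]$, and $p_{ij}\le R/L$; moreover, because the walk mixes away from $i$ between successive visits, the contributing indicators are essentially uncorrelated, so $\mathrm{Var}(C_m) = O(R) \asymp O(L)$ rather than the crude $O(R^2)$. This ceiling and variance bound are exactly what convert a naive per-trial estimate into the stated prefactor $R/L$ together with the $\sqrt{1/(LM)}$ (as opposed to $\sqrt{1/M}$) rate.

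\textbf{Step 3 and main obstacle.} Applying a Hoeffding/Bernstein inequality to the i.i.d. quantities $C_m/L$, with range $R/L$ and the variance proxy from Step 2, yields a tail of the form $\Pr[\,|\hat p_{ij} - p_{ij}|\ge\epsilon\,] \le 2\exp(-c\,L^3 M\,\epsilon^2/R^2)$ for the appropriate constant $c$; setting the right-hand side equal to $\delta$ and solving gives $\epsilon \le \frac{R}{L}\sqrt{\frac{\log(2/\delta)}{2LM}} = \frac{\lceil(L-1)/2\rceil}{L}\sqrt{\frac{\log(2/\delta)}{2LM}}$, which is the claim (for the whole distribution over $N(i)$ rather than a single entry $j$, a union bound over the $\le\deg(i)$ neighbors replaces $\delta$ by $\delta/\deg(i)$). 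The concentration step is routine; the real work — and the main obstacle — is Step 2: pinning the combinatorial ceiling $\lceil(L-1)/2\rceil$ on how often a non-backtracking walk of length $L$ can hit a $1$-hop neighbor (being careful about the ``free'' first step, which occupies a slot but carries no randomness, and about restarts), and justifying that the $LM$ pooled step-observations act as $LM$ effective samples via the walk's rapid mixing away from $i$. Choosing the exact form of the inequality so that the constant $\tfrac12$ appears under the square root is then bookkeeping.
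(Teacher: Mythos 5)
Your skeleton matches the paper's: bound the fraction of steps of a single length-$L$ walk that can land on a fixed neighbor $j$ by a combinatorial ceiling $R/L$, then apply Hoeffding to the $LM$ pooled step-indicators and solve for $\epsilon$. The paper's proof is exactly this: it defines $X$ as the empirical visit frequency over the sequence of $|S|=LM$ steps, asserts $X\le \lceil(L-1)/2\rceil/L$ for a \emph{regular} random walk, and applies Hoeffding with tail $2\exp\bigl(-2L^{3}Mt^{2}/\lceil(L-1)/2\rceil^{2}\bigr)$; the step you defer as your ``main obstacle'' is handled there simply by taking the sample size to be $LM$ with per-sample range $R/L$ (the within-walk dependence is not further justified in the paper either, so you are not missing anything the authors supply).

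There is, however, a concrete error in your Step 2: you derive the ceiling $\lceil(L-1)/2\rceil$ from the walk being \emph{non-backtracking}, but Lemma~\ref{lem:crw1} concerns ordinary random walks; the non-backtracking variant is Lemma~\ref{lem:crw2}, and its whole point is that forbidding backtracking \emph{improves} the ceiling to $\lceil(L-1)/3\rceil$. The correct mechanism for the $/2$ here is the opposite of what you state: a plain walk may oscillate $i\to j\to i\to j\to\cdots$, hitting $j$ at best every other step, which is what caps the visit count at roughly $(L-1)/2$. A genuinely non-backtracking walk cannot execute the two-step return $j\to i\to j$ (it violates $w_{l-1}\neq w_{l+1}$), so revisiting $j$ costs at least three steps --- if you carried your own reasoning through consistently you would land on $\lceil(L-1)/3\rceil$ and would be proving Lemma~\ref{lem:crw2} rather than Lemma~\ref{lem:crw1}. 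The final formula you write down happens to be right, but the justification of the prefactor --- which you yourself identify as the real content of the proof --- is attributed to the wrong walk model, and the claimed variance bound $\mathrm{Var}(C_m)=O(R)$ via ``rapid mixing'' is asserted, not established. Fix Step 2 by arguing the oscillation bound for the plain walk, and reserve the non-backtracking argument for the $/3$ improvement.
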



We can make the convergence even faster by using ``non-backtracking'' random walks \cite{alon2007non}.
Given the start node $s$ and walk length $L$, its function is defined as follows:
\vspace{-3mm}
\begin{equation}
\mathcal{W}(s, L) = 
\begin{cases} (w_{0}=s, ..., w_{L}) \space 
\begin{array}{c}
w_{l} \in N{(w_{l-1})}, \forall l \in [1, L]
\\
w_{l-1} \neq w_{l+1}, \forall l\in [1, L-1]
\end{array}.
\end{cases}
\vspace{-2mm}
\end{equation}
Thanks to it, we improve Lemma~\ref{lem:crw1} to have a tighter bound of error $\epsilon$:
\begin{lemma} [Convergence of Non-Backtracking Random Walks] \label{lem:crw2}
With the same condition as in Lemma~\ref{lem:crw1}, the error $\epsilon$ by non-backtracking random walks is no greater than
$\frac{\lceil (L - 1) / 3 \rceil}{L} \sqrt{\frac{\log{(2/\delta)}}{2LM}}$.
\end{lemma}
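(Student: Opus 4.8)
The plan is to mirror the proof of Lemma~\ref{lem:crw1} but exploit the extra structural constraint of non-backtracking walks to sharpen the Markov-chain mixing / concentration argument. First I would recall the setup from Lemma~\ref{lem:crw1}: we run $M$ independent walks of length $L$ from a fixed start node $s$, and we estimate the probability that the walk steps onto a given $1$-hop neighbor by the empirical frequency of such transitions. The error $\epsilon$ between the empirical distribution and the true one-step distribution decomposes into (a) a \emph{bias} term, coming from the fact that a walk of finite length $L$ only gives $L$ (dependent) samples of ``which neighbor do we move to'', and the chain is not yet stationary, and (b) a \emph{fluctuation} term, controlled by a Hoeffding-type bound over the $LM$ total steps, which yields the common factor $\sqrt{\tfrac{\log(2/\delta)}{2LM}}$ via a union bound that costs the $\log(2/\delta)$. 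The only thing that changes between the two lemmas is the combinatorial prefactor multiplying $\tfrac{1}{L}\sqrt{\tfrac{\log(2/\delta)}{2LM}}$, namely $\lceil (L-1)/2\rceil$ versus $\lceil (L-1)/3\rceil$.

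The key step is therefore to show that for non-backtracking walks the relevant count of ``correlated'' or ``wasted'' steps drops from $\lceil (L-1)/2\rceil$ to $\lceil (L-1)/3\rceil$. I would argue as follows. In an ordinary random walk, the obstruction to treating consecutive steps as giving fresh information about the $1$-hop neighborhood of $s$ is immediate backtracking: from a neighbor $w_1$ of $s$ the walk can return to $s$ in one step, so pairs of steps can cancel, and in the worst case only every \emph{other} step contributes, giving the $\lceil (L-1)/2\rceil$ factor in Lemma~\ref{lem:crw1}. Under the non-backtracking constraint $w_{l-1}\neq w_{l+1}$, a walk that has just left $s$ cannot return to $s$ on the very next step; the earliest return to $s$ takes at least two further steps (leave $s$, move away, then at least two steps back, i.e. a cycle of length $\geq 3$ through $s$). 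So the worst-case ``period'' of returns to the starting neighborhood is $3$ rather than $2$, and the number of effectively wasted steps along a length-$L$ walk is at most $\lceil (L-1)/3\rceil$. Plugging this improved count into exactly the same bias-plus-Hoeffding bookkeeping as in Lemma~\ref{lem:crw1} gives $\epsilon \le \frac{\lceil (L-1)/3\rceil}{L}\sqrt{\frac{\log(2/\delta)}{2LM}}$ with probability at least $1-\delta$.

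I would finish by noting that the fluctuation part is verbatim the same: the $M$ trials are independent and each step is a bounded ($[0,1]$-valued) indicator, so Hoeffding over the $LM$ steps together with a union bound over the ``$2$'' tail events (two-sided) produces the $\sqrt{\log(2/\delta)/(2LM)}$ factor untouched; only the deterministic prefactor improves. The main obstacle I anticipate is making the ``period $3$ instead of period $2$'' claim fully rigorous at the level of the Markov-chain coupling used in Lemma~\ref{lem:crw1}: one must check that the non-backtracking lifted chain (on directed edges) still admits the same Hoeffding-style concentration, and that the structural cycle-length bound genuinely controls the bias term rather than merely being an intuitive analogy. Since the statement of Lemma~\ref{lem:crw2} explicitly says ``with the same condition as in Lemma~\ref{lem:crw1}'' and only swaps the constant, I would lean on the machinery already established there and isolate this single combinatorial improvement as the crux of the argument.
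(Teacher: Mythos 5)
Your plan matches the paper's proof in its essentials: the only change from Lemma~\ref{lem:crw1} is the combinatorial prefactor, which the paper obtains by noting that the non-backtracking constraint forbids an immediate return to a just-visited neighbor, so the shortest return takes three steps instead of two and the visit frequency of any fixed neighbor is bounded by $\lceil (L-1)/3\rceil/L$ instead of $\lceil (L-1)/2\rceil/L$; this bound is then fed into the identical Hoeffding computation. The one cosmetic difference is that the paper uses this quantity as the range of the bounded random variable inside Hoeffding's inequality rather than as a separate bias/``wasted steps'' count, but the resulting algebra and the crux of the argument are the same as yours.
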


\begin{proof}
See Appx.~\ref{ap:subsec:prooflemma12}. \hfill$\blacksquare$
\end{proof}


\noindent\textbf{Structural-Aware Node Representation.}
Based on ${\boldsymbol W}$, we apply a series of transformations to generate better and unbiased representations of nodes in a fast way.
An element-wise multiplication by ${\boldsymbol A}$ is done to keep the approximation of $1$-hop neighbor for each node, which is sparse but supplies sufficient information.
We use the inverse of the degree matrix ${\boldsymbol D}^{-1}$ to reduce the influence of nodes with large degrees.
This prevents them from dominating the pairwise distance by containing more elements in their rows. 
The element-wise logarithm rescales the distribution in ${\boldsymbol W}$, in order to enlarge the difference between smaller structures.
We use Singular Value Decomposition (SVD) for efficient rank-$d$ decomposition of sparse ${\boldsymbol W}$, and multiply the left-singular vectors ${\boldsymbol U}$ by the squared eigenvalues $\sqrt{{\boldsymbol \Sigma}}$ to correct the scale.

\noindent\textbf{``Emphasis'' Matrix Construction.}
Directly measuring the node similarity in the graph is not trivial, or may be time consuming (e.g., by counting motifs).
Therefore, we propose to compute the node similarity via the structural-aware node representations, which capture the higher-order information, and construct the \emphasis ${\boldsymbol A}^{*}$ by weighing ${\boldsymbol A}$ with the node similarity.
The intuition is that the nodes that are closer in the embedding space are better connected with higher-order structures. 
The similarity function is
$\mathcal{S}({\boldsymbol U}_{i}, {\boldsymbol U}_{j}) = e^{-\mathcal{D}({\boldsymbol U}_{ik}, {\boldsymbol U}_{jk})}$, where $e$ is the Euler's number. 
It is a universal law \cite{shepard1987toward}, which turns the distance into similarity, and bounds it from $0$ to $1$. 
While $\mathcal{D}$ can be any distance metric, we use Euclidean as it works well empirically.

\vspace{-4mm}
\subsection{\methodexp}
\vspace{-1mm}
The algorithm of \methodexp is in Appx.~\ref{algo:main}.
\methodexp takes as input the \emphasis ${\boldsymbol A}^{*}$,
the compatibility matrix $\hat{{\boldsymbol H}}^{*}$ estimated by ${\boldsymbol A}^{*}$, and the initial beliefs $\hat{{\boldsymbol E}}$.
It computes
the beliefs $\hat{{\boldsymbol B}}$ iteratively 
by aggregating the beliefs of neighbors through ${\boldsymbol A}^{*}$ until they converge.
This reusage of ${\boldsymbol A}^{*}$ aims to draw attention to the neighbors that are more structurally important. 
By exploiting \nef with $\hat{{\boldsymbol H}}^{*}$, \method propagates properly in heterophily graphs.


\noindent\textbf{Convergence Guarantee.}
To ensure the convergence of \methodexp, we introduce a scaling factor $f$ during the iterations. 
A smaller $f$ leads to a faster convergence but distorts the results, thus we set $f$ to $0.9 / \rho{({\boldsymbol A}^{*})}$. 
Its exact convergence is:
\begin{lemma}[Exact Convergence] \label{lem:con}
The criterion for the exact convergence of \methodexp is $0 < f < 1/\rho{({\boldsymbol A}^{*})}$, where $\rho{(\cdot)}$ denotes the spectral radius of the matrix.
\end{lemma}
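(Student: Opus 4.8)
The plan is to lift the matrix recurrence that \methodexp runs to a vectorized linear fixed-point iteration and then apply the textbook spectral-radius criterion for stationary iterations. The \methodexp update is the scaled analogue of Eqn.~\ref{eq:prop} with ${\boldsymbol A},\hat{\boldsymbol H}$ replaced by ${\boldsymbol A}^{*},\hat{\boldsymbol H}^{*}$ and the factor $f$ in front of the propagation term, i.e. $\hat{\boldsymbol B}^{(t+1)} = \hat{\boldsymbol E} + f\,{\boldsymbol A}^{*}\hat{\boldsymbol B}^{(t)}\hat{\boldsymbol H}^{*}$ started from $\hat{\boldsymbol B}^{(0)}={\boldsymbol 0}$. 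Applying the identity $\text{vec}({\boldsymbol P}{\boldsymbol X}{\boldsymbol Q})=({\boldsymbol Q}^{T}\otimes{\boldsymbol P})\,\text{vec}({\boldsymbol X})$ to the propagation term rewrites this as $\text{vec}(\hat{\boldsymbol B}^{(t+1)}) = \text{vec}(\hat{\boldsymbol E}) + {\boldsymbol M}\,\text{vec}(\hat{\boldsymbol B}^{(t)})$ with ${\boldsymbol M} = f\,(\hat{\boldsymbol H}^{*T}\otimes{\boldsymbol A}^{*})$, exactly the same algebraic move used to derive ${\boldsymbol X}$ in Lemma~\ref{lem:nef}.

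Next I would invoke the standard fact that a stationary iteration ${\boldsymbol x}^{(t+1)}={\boldsymbol b}+{\boldsymbol M}{\boldsymbol x}^{(t)}$ converges, from every starting vector, to the unique fixed point $({\boldsymbol I}-{\boldsymbol M})^{-1}{\boldsymbol b}$ if and only if $\rho({\boldsymbol M})<1$ (via the Neumann series $\sum_{k\ge 0}{\boldsymbol M}^{k}$, using ${\boldsymbol M}^{t}\to{\boldsymbol 0}\iff\rho({\boldsymbol M})<1$); this is precisely the ``exact convergence'' claimed, and the residual then decays geometrically at rate $\rho({\boldsymbol M})$. So the lemma reduces to evaluating $\rho({\boldsymbol M})$. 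Using the Kronecker spectrum rule $\sigma({\boldsymbol P}\otimes{\boldsymbol Q})=\{\lambda\mu : \lambda\in\sigma({\boldsymbol P}),\ \mu\in\sigma({\boldsymbol Q})\}$ together with $\sigma(\hat{\boldsymbol H}^{*T})=\sigma(\hat{\boldsymbol H}^{*})$ gives $\rho({\boldsymbol M}) = f\,\rho(\hat{\boldsymbol H}^{*})\,\rho({\boldsymbol A}^{*})$, so the exact condition for convergence is $0<f<1/\bigl(\rho(\hat{\boldsymbol H}^{*})\,\rho({\boldsymbol A}^{*})\bigr)$.

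The remaining step — and the one I expect to be the crux — is to show $\rho(\hat{\boldsymbol H}^{*})\le 1$, so that the criterion collapses to the stated $0<f<1/\rho({\boldsymbol A}^{*})$. Here I would use that $\hat{\boldsymbol H}^{*}$ is the residual around $1/c$ of a row-normalized compatibility matrix ${\boldsymbol H}^{*}$ (the centering convention already stated for BP): since ${\boldsymbol H}^{*}{\boldsymbol 1}={\boldsymbol 1}$ and $\tfrac1c{\boldsymbol 1}{\boldsymbol 1}^{T}{\boldsymbol 1}={\boldsymbol 1}$, the all-ones vector lies in $\ker\hat{\boldsymbol H}^{*}$ where $\hat{\boldsymbol H}^{*}={\boldsymbol H}^{*}-\tfrac1c{\boldsymbol 1}{\boldsymbol 1}^{T}$; moreover $\text{span}({\boldsymbol 1})$ is ${\boldsymbol H}^{*}$-invariant and the rank-one correction vanishes modulo $\text{span}({\boldsymbol 1})$, so $\hat{\boldsymbol H}^{*}$ induces the same operator as ${\boldsymbol H}^{*}$ on the quotient $\mathbb{R}^{c}/\text{span}({\boldsymbol 1})$. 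Hence $\sigma(\hat{\boldsymbol H}^{*})$ equals $\sigma({\boldsymbol H}^{*})$ with the Perron eigenvalue $1$ replaced by $0$, and since ${\boldsymbol H}^{*}$ is row-stochastic all those eigenvalues have modulus $\le 1$; therefore $\rho(\hat{\boldsymbol H}^{*})\le 1$. Plugging this in yields $\rho({\boldsymbol M})\le f\,\rho({\boldsymbol A}^{*})$, so $0<f<1/\rho({\boldsymbol A}^{*})$ is sufficient; it is also necessary in the extremal case $\rho(\hat{\boldsymbol H}^{*})=1$ (e.g. an orthogonal/stochastic $\hat{\boldsymbol H}^{*}$ with a unit-modulus non-Perron eigenvalue), which is why the bound is stated as \emph{the} criterion, and together with $f>0$ (needed so the recurrence actually moves toward the intended BP fixed point rather than freezing at $\hat{\boldsymbol E}$) this is exactly the claimed range. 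The one caveat I would flag explicitly is that \methodest obtains $\hat{\boldsymbol H}^{*}$ by (ridge) regression rather than as an exactly row-stochastic residual; if it is fed to the propagation without re-centering/re-normalization, the honest statement keeps $\rho(\hat{\boldsymbol H}^{*})$ in the denominator, but in the regime of interest $\rho(\hat{\boldsymbol H}^{*})\approx 1$ and the stated bound is essentially tight.
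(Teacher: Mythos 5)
Your proposal is correct and follows essentially the same route as the paper: reduce convergence to the condition $f\,\rho({\boldsymbol A}^{*})\,\rho(\hat{\boldsymbol H}^{*})<1$ and then argue that centering the row-normalized ${\boldsymbol H}^{*}$ gives $\rho(\hat{\boldsymbol H}^{*})\le 1$, so the criterion collapses to $0<f<1/\rho({\boldsymbol A}^{*})$. The only differences are that you actually derive the product condition (via vectorization and the Kronecker spectrum rule) where the paper simply asserts it, and you are more careful than the paper on one point: the paper claims the centered spectral radius ``is less than $1$,'' whereas in general only $\rho(\hat{\boldsymbol H}^{*})\le 1$ holds (e.g.\ a $2\times 2$ permutation matrix centers to a matrix with eigenvalue $-1$), which is exactly the extremal case you flag to justify the bound being stated as \emph{the} criterion.
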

\vspace{-1mm}
\begin{proof}
See Appx.~\ref{ap:subsec:proofcon}. \hfill$\blacksquare$
\end{proof}


\noindent\textbf{Complexity Analysis.}
\methodexp uses sparse matrix representation of graphs and scales linearly.
Its complexity is:
\begin{lemma} \label{lem:complexity}
The time complexity of \methodexp is approximately $O(m)$ and the space complexity is $O(\max{(m, n \cdot L \cdot M)} + n \cdot c^{2})$.
\vspace{-1mm}
\end{lemma}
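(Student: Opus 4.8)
The plan is to bound time and space separately, walking through the three stages of the pipeline that \methodexp invokes: (i) construction of the \emphasis ${\boldsymbol A}^{*}$ via Alg.~\ref{algo:rw}, (ii) estimation of $\hat{\boldsymbol H}^{*}$ via Alg.~\ref{algo:cmest}, and (iii) the iterative propagation of Eqn.~\ref{eq:prop} with ${\boldsymbol A}^{*}$ and $\hat{\boldsymbol H}^{*}$. For the time bound I would argue each stage is $O(m)$ (treating $c$, $d$, $L$, $M$ and the iteration count as constants, which is standard for this kind of analysis). For the random-walk phase in Alg.~\ref{algo:rw}, lines $2$–$5$ perform $n \cdot M$ walks of length $L$, each step costing $O(1)$ with an adjacency-list representation, so $O(n \cdot L \cdot M) = O(n)$; the masking/normalization/logarithm on line $6$ touches only the $O(m)$ nonzeros of ${\boldsymbol A}$; the truncated SVD on line $7$ of a sparse $n\times n$ matrix with $O(m)$ nonzeros to rank $d$ costs $O(m \cdot d)=O(m)$ via Lanczos-type iterations; lines $8$–$9$ are $O(n d)$ and $O(m d)$ respectively. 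Summing gives $O(m)$ for stage (i) (using $m \ge n$ for connected graphs, or absorbing the $n\cdot L\cdot M$ term).

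For stage (ii), forming ${\boldsymbol X} = {\boldsymbol I}_{c\times c}\otimes({\boldsymbol A}^{*}\hat{\boldsymbol E})$ costs one sparse-dense product ${\boldsymbol A}^{*}\hat{\boldsymbol E}$ which is $O(mc)=O(m)$; restricting to the prior rows ${\boldsymbol i}$ only shrinks it; and RidgeCV on a system with $c^2$ unknowns and $|\mathcal P| c$ equations costs $O(|\mathcal P| c^{3} + c^{6})=O(n)$ — negligible as remarked after Alg.~\ref{algo:cmest}. For stage (iii), each iteration of $\hat{\boldsymbol B}\leftarrow \hat{\boldsymbol E} + f\,{\boldsymbol A}^{*}\hat{\boldsymbol B}\hat{\boldsymbol H}^{*}$ costs the sparse product ${\boldsymbol A}^{*}\hat{\boldsymbol B}$ in $O(mc)$ plus the dense product $(\cdot)\hat{\boldsymbol H}^{*}$ in $O(nc^{2})$, i.e. $O(m)$ per iteration; by Lemma~\ref{lem:con} the choice $f = 0.9/\rho({\boldsymbol A}^{*})$ sits strictly inside the convergence region, so a constant (geometric) number of iterations suffices, giving $O(m)$ overall. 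Combining the three stages yields the claimed $O(m)$ time.

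For space, I would account for the largest objects kept in memory: the sparse ${\boldsymbol A}$ and ${\boldsymbol A}^{*}$ each need $O(m)$; the raw walk-count structure ${\boldsymbol W}^{\prime}$ before masking, if stored naively over the visited triples, needs $O(n\cdot L\cdot M)$ (each of the $n$ starting nodes logs up to $L$ visits over $M$ trials); the embeddings ${\boldsymbol U},{\boldsymbol \Sigma},{\boldsymbol V}$ need $O(nd)$, and the belief/compatibility matrices need $O(nc)$ and $O(c^{2})$. Taking the dominant terms gives $O(\max(m,\ n\cdot L\cdot M) + n c^{2})$ as stated (the $nd$ and $nc$ terms are absorbed into the $m$ or the $nc^2$ term under the usual assumption $d=O(c^2)$, or can simply be listed as lower order).

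The main obstacle I anticipate is making the "constant number of iterations" step honest: the $O(m)$ time bound tacitly treats the number of propagation rounds as $O(1)$, but that round count really depends on the spectral gap of ${\boldsymbol A}^{*}\hat{\boldsymbol H}^{*}$ (equivalently, how far $f\rho({\boldsymbol A}^{*})\|\hat{\boldsymbol H}^{*}\|$ is below $1$); strictly one gets $O(m\log(1/\varepsilon)/\log(1/\gamma))$ for convergence tolerance $\varepsilon$ and contraction factor $\gamma = 0.9\,\|\hat{\boldsymbol H}^{*}\|$. So the clean statement is really "approximately $O(m)$" — which is exactly the hedge in the lemma — and I would state explicitly that we treat $c$, $d$, $L$, $M$ and the iteration count as constants, so the honest work is just carefully itemizing the per-stage costs above and checking that no hidden super-linear term (dense intermediate, full SVD, etc.) creeps in.
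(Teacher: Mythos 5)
Your proposal is correct and follows essentially the same route as the paper's proof: itemize the per-stage costs (random walks, sparse SVD, the $O(|\mathcal{P}|\cdot c^{3})$ regression, and the $O(m)$-per-round propagation), treat $c$, $d$, $L$, $M$ and the iteration count as constants, and bound the stored objects by $\max(m, n\cdot L\cdot M)$ plus the $n\cdot c^{2}$ term. The only cosmetic difference is that the paper attributes the $n\cdot c^{2}$ space term specifically to the Kronecker-product feature matrix ${\boldsymbol X}$, and your explicit caveat about the hidden dependence on the iteration count is, if anything, more careful than the paper's own treatment.
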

\begin{proof}
See Appx.~\ref{ap:subsec:proofcomplexity}.
\hfill$\blacksquare$
\end{proof}

\vspace{-5mm}
\section{Experiments} \label{sec:exp}
\vspace{-2mm}




In this section, we aims to answer the following questions:
\begin{compactenum}[{Q}1.]
\item {\bf Accuracy}: How well does \method work by estimating and exploiting \nef?
\item {\bf Scalability}: How does the running time of \method scale w.r.t. graph size?
\item {\bf Explainability}: How does \method explain the real-world graphs?
\end{compactenum}

\noindent\textbf{Datasets.}
We focus on large graphs and include $8$ graphs with at least $20$K nodes.
For each dataset, we sample only a few node labels for training for five times and report the average.
``Synthetic'' is the enlarged graph in Fig.~\ref{fig:c2}, which exhibits \xophily \nef.\looseness=-1

\noindent\textbf{Baselines.}
We compare \method with five baselines and separate them into four groups: \textit{General GNNs:} \gcn~\cite{kipf2016semi}, \appnp~\cite{klicpera2018predict}. \textit{Heterophily GNNs:} \mixhop~\cite{abu2019mixhop}, \gprgnn~\cite{chien2021adaptive}. \textit{BP-based methods:} \hols~\cite{eswaran2020higher}. \textit{Our proposed methods:} \methodhom and \method.
\methodhom is \method using identity matrix as compatibility matrix, which assumes homophily and does not handle \nef.

\noindent\textbf{Experimental Settings.}
For GNNs, one-hot node degrees are used as the node features, as implemented by PyG \cite{Fey/Lenssen/2019}.
Experiments are run on a server with $3.2$GHz Intel Xeon CPU.
Details of the experimental setup are in Appx.~\ref{sec:rep}.

\begin{table}[!t]
\caption{\emphasize{\method wins on \xophily and Heterophily datasets.} \label{tab:effecthet}}
\centering{\resizebox{0.97\textwidth}{!}{
\begin{tabular}{c | c r r | c r r | c r r | c r r}

\hline

{\bf Dataset} 
& \multicolumn{3}{c|}{\bf Synthetic} & \multicolumn{3}{c|}{\textbf{Pokec-Gender} \cite{takac2012data}} 
& \multicolumn{3}{c|}{\textbf{arXiv-Year} \cite{hu2020open}} & \multicolumn{3}{c}{\textbf{Patent-Year} \cite{leskovec2005graphs}} \\

\hline

\# of Nodes / Edges / Classes
& \multicolumn{3}{c|}{1.2M / 34.0M / 6} & \multicolumn{3}{c|}{1.6M / 22.3M / 2}
& \multicolumn{3}{c|}{169K / 1.2M / 5} & \multicolumn{3}{c}{1.3M / 4.3M / 5} \\

Label Fraction
& \multicolumn{3}{c|}{4\%} & \multicolumn{3}{c|}{0.4\%}
& \multicolumn{3}{c|}{4\%} & \multicolumn{3}{c}{4\%}
\\

\hline

\nef Strength
& \multicolumn{3}{c|}{Strong \xophily} & \multicolumn{3}{c|}{Strong Heterophily}
& \multicolumn{3}{c|}{Weak \xophily} & \multicolumn{3}{c}{Weak Heterophily} \\



\hline

Method
& Accuracy & Time (s) & Rel. Time
& Accuracy & Time (s) & Rel. Time
& Accuracy & Time (s) & Rel. Time
& Accuracy & Time (s) & Rel. Time
\\

\hline

\gcn   
& 16.7$\pm$0.0 & 3456 & \bronze{4.1$\times$}
& 51.8$\pm$0.1 & \bronze{2906} & \silver{3.4$\times$}
& 35.3$\pm$0.1 & \silver{132} & \silver{2.5$\times$}
& 26.0$\pm$0.0 & 894 & 2.3$\times$
\\

\appnp   
& 18.6$\pm$1.1 & 7705 & 9.2$\times$
& 50.9$\pm$0.3 & 6770 & 7.8$\times$
& 33.5$\pm$0.2 & 423 & 8.1$\times$
& \silver{27.5$\pm$0.2} & 2050 & 5.2$\times$
\\

        
\hline

\mixhop  
& 16.7$\pm$0.0 & 58391 & 70.0$\times$
& 53.4$\pm$1.2 & 53871 & 62.1$\times$
& \gold{39.6$\pm$0.1} & 2983 & 57.4$\times$
& \bronze{26.8$\pm$0.1} & 18787 & 47.6$\times$
\\

\gprgnn
& 18.9$\pm$1.2 & 7637 & 9.1$\times$
& 50.7$\pm$0.2 & 6699 & \bronze{7.7$\times$}
& 30.1$\pm$1.4 & \bronze{400} & \bronze{7.7$\times$}
& 25.3$\pm$0.1 & 2034 & 5.1$\times$
\\

\hline
        
\hols
& \silver{46.1$\pm$0.1} & \bronze{1672} & \silver{2.0$\times$}
& \bronze{54.4$\pm$0.1} & 8552 & 9.9$\times$
& 34.1$\pm$0.3 & 566 & 10.9$\times$
& 23.6$\pm$0.0 & \bronze{510} & \bronze{1.3$\times$}
\\

        
\hline

\methodhom
& \bronze{45.6$\pm$0.1} & \gold{835} & \gold{1.0$\times$}
& \silver{56.9$\pm$0.2} & \silver{869} & \gold{1.0$\times$}
& \bronze{37.0$\pm$0.3} & \gold{52} & \gold{1.0$\times$}
& 24.3$\pm$0.0 & \silver{429} & \silver{1.1$\times$}
\\

\method
& \gold{80.4$\pm$0.0} & \silver{841} & \gold{1.0$\times$}
& \gold{67.3$\pm$0.1} & \gold{867} & \gold{1.0$\times$}
& \silver{38.9$\pm$0.1} & \gold{52} & \gold{1.0$\times$}
& \gold{28.7$\pm$0.1} & \gold{395} & \gold{1.0$\times$}
\\

\hline
\end{tabular}
}}
\end{table}

\begin{table}[!t]
\vspace{-6mm}
\caption{\emphasize{\method wins on Homophily datasets.} \label{tab:effecthom}}
\setlength\fboxsep{0pt}
\centering{\resizebox{0.97\textwidth}{!}{
\begin{tabular}{c | c r r | c r r | c r r | c r r }

\hline
{\bf Dataset} & \multicolumn{3}{c|}{\textbf{Facebook} \cite{rozemberczki2019multiscale}} & \multicolumn{3}{c|}{\textbf{GitHub} \cite{rozemberczki2019multiscale}} & \multicolumn{3}{c|}{\textbf{arXiv-Category} \cite{wang2020microsoft}} & \multicolumn{3}{c}{\textbf{Pokec-Locality} \cite{takac2012data}} \\
\hline
\# of Nodes / Edges / Classes 
& \multicolumn{3}{c|}{22.5K / 171K / 4} & \multicolumn{3}{c|}{37.7K / 289K / 2} 
& \multicolumn{3}{c|}{169K / 1.2M / 40} & \multicolumn{3}{c}{1.6M / 22.3M / 10} \\

Label Fraction
& \multicolumn{3}{c|}{4\%} & \multicolumn{3}{c|}{4\%}
& \multicolumn{3}{c|}{4\%} & \multicolumn{3}{c}{0.4\%}
\\

\hline

Method
& Accuracy & Time (s) & Rel. Time
& Accuracy & Time (s) & Rel. Time
& Accuracy & Time (s) & Rel. Time
& Accuracy & Time (s) & Rel. Time
\\

\hline

\gcn   
& 67.0$\pm$0.8 & \silver{12} & \silver{2.0$\times$}
& \silver{81.0$\pm$0.6} & \silver{28} & \silver{2.2$\times$}
& 25.4$\pm$0.3 & 216 & \bronze{2.3$\times$}
& 17.3$\pm$0.4 & \bronze{4002} & \silver{2.9$\times$}
\\

\appnp   
& 50.5$\pm$2.2 & \bronze{46} & \bronze{7.7$\times$}
& 74.2$\pm$0.0 & \bronze{73} & \bronze{5.6$\times$}
& 19.4$\pm$0.6 & 1176 & 12.3$\times$
& 16.8$\pm$1.7 & 11885 & 8.6$\times$
\\

        
\hline

\mixhop  
& \bronze{69.2$\pm$0.7} & 296 & 49.3$\times$
& 77.8$\pm$1.3 & 526 & 40.5$\times$
& 33.0$\pm$0.6 & 3203 & 33.4$\times$
& 16.9$\pm$0.3 & 52139 & 37.9$\times$
\\

\gprgnn
& 51.9$\pm$1.5 & 47 & 7.8$\times$
& 74.1$\pm$0.1 & 75 & 5.8$\times$
& 19.7$\pm$0.3 & 1174 & 12.2$\times$
& 30.0$\pm$2.0 & 11959 & 8.7$\times$
\\

\hline
        
\hols
& \gold{86.0$\pm$0.4} & 934 & 155.7$\times$
& \bronze{80.8$\pm$0.5} & 126 & 9.7$\times$
& \silver{61.4$\pm$0.2} & 627 & 6.5$\times$
& \bronze{63.7$\pm$0.3} & 8139 & \bronze{5.9$\times$}
\\

        
\hline

\methodhom
& \silver{85.2$\pm$0.5} & \gold{6} & \gold{1.0$\times$}
& \gold{81.3$\pm$0.5} & \gold{13} & \gold{1.0$\times$}
& \gold{61.7$\pm$0.2} & \gold{96} & \gold{1.0$\times$}
& \gold{66.0$\pm$0.2} & \silver{1437} & \gold{1.0$\times$}
\\

\method
& \silver{85.2$\pm$0.5} & \gold{6} & \gold{1.0$\times$}
& \gold{81.3$\pm$0.5} & \gold{13} & \gold{1.0$\times$}
& \bronze{58.8$\pm$0.6} & \silver{108} & \silver{1.1$\times$}
& \silver{64.8$\pm$0.8} & \gold{1377} & \gold{1.0$\times$}
\\

\hline
\end{tabular}
}}
\vspace{-3mm}
\end{table} 

\subsection{Q1 - Accuracy}
\vspace{-2mm}
In Table~\ref{tab:effecthet} and \ref{tab:effecthom}, we report the accuracy and running time. 
We highlight the top three from dark to light by \goldc{~}, \silverc{~} and \bronzec{~} denoting the first, second and third place.
\textit{In summary, \method wins on \xophily, heterophily and homophily graphs.}\looseness=-1




\noindent\textbf{\xophily and Heterophily.}
In Table~\ref{tab:effecthet}, \method outperforms all the competitors significantly by more than $34.3\%$ and $12.9\%$ accuracy on ``Synthetic'' and ``Pokec-Gender'', respectively.
These graphs exhibit strong \nef, thus \method boosts the accuracy owing to precise estimations of compatibility matrix. 
Heterophily GNNs give results close to majority voting when the observed labels are not adequate. 
With homophily assumption, General GNNs and BP-based methods also not perform well.
Both ``arXiv-Year'' and ``Patent-Year'' have weak \nef (Sec.~\ref{ssec:discover}).
Even so, \method still outperforms the competitors by estimating a reasonable compatibility matrix (Fig.~\ref{fig:ex4}).



\noindent\textbf{Homophily.}
In Table~\ref{tab:effecthom}, \methodhom outperforms all the competitors on $3$ out of $4$ homophily graphs, namely ``GitHub'', ``arXiv-Category'' and ``Pokec-Locality'', and \method performs similarly to \methodhom.
In addition, \methodhom performs competitively with \hols on ``Facebook'', while being $155.7\times$ faster.

\noindent\textbf{Ablation Study.}
\textit{Our optimizations make a difference.}
We evaluate different compatibility matrices -- 
(i) \method-EC uses edge counting on the labels of adjacent nodes in the priors, and
(ii) \method-A uses the adjacency matrix instead of \emphasis as the input of \methodest.
To evaluate the cases when imbalanced labels are given, we upsample $5\%$ labels to the class with the fewest labels in the datasets with weak \nef during the estimation.
In Table~\ref{tab:ablation}, we find that \method outperforms all its variants in all datasets.
In the graphs with strong \nef, \method shows its robustness to the structural noises and gives better results.
In the imbalanced graphs, while \method-EC brings its vulnerability to light, \method stays with high accuracy.
This study highlights the importance of a compatibility matrix estimation, as well as forming it into an optimization problem as shown in Lemma~\ref{lem:nef}. 

\subsection{Q2 - Scalability}
\vspace{-2mm}
\textit{\method is scalable and thrifty.}
We vary the edge number in ``Pokec-Gender'' and plot against the running time, including training and inference.
In Fig.~\ref{fig:scale}, \method scales linearly as expected (Lemma~\ref{lem:complexity}).
Table~\ref{tab:dollar} shows the estimated AWS dollar cost in ``Pokec-Gender'', assuming that we use a CPU machine for \method, and a GPU one for \gcn.
\method achieves up to $45\times$ savings. 
Details in Appx.~\ref{ssec:dollar}.

\vspace{-5mm}
\subsection{Q3 - Explainability} \label{ssec:qtexplain}
\vspace{-2mm}

Fig.~\ref{fig:ex} shows the compatibility matrices that \method recovered.
\textit{In a nutshell, the results agree well with the intuition.}
\hide{
We illustrate that the estimations of compatibility matrix by \method are reasonable in Fig.~\ref{fig:ex}, so as to interpreting the interrelations of classes extremely well.
The interrelations of shown estimated compatibility matrices are similar to the ones of edge counting in Fig.~\ref{fig:dis}, while being more robust to the noisy neighbors, namely, weakly connected ones.
}
For ``Synthetic'', \method matches the answer used for graph generation.
For ``Pokec-Gender'' (Fig.~\ref{fig:ex2}), \method report heterophily,
where people incline to have more opposite gender interactions \cite{ghosh2019quantifying}.
\hide{
Although ``arXiv-Year'' and ``Patent-Year'' do not have strong \nef, \method still gives an estimated compatibility matrices making much sense in the real-world (Fig.~\ref{fig:ex3} and~\ref{fig:ex4}), where the papers and patents tend to cite to the ones whose published dates are relatively close to them.
}
For ``arXiv-Year'' and ``Patent-Year'',
\method find that papers and patents tend to cite the ones published in nearby years, which also agrees with intuition (Fig.~\ref{fig:ex3} and~\ref{fig:ex4}).

\begin{table}[t]
\begin{minipage} {0.61\linewidth}
\caption{\emphasize{Ablation Study}: Estimating compatibility matrix by proposed \emphasis is essential. \label{tab:ablation}}
\vspace{-4mm}
\setlength\fboxsep{0pt}
\centering{\resizebox{1\columnwidth}{!}{
\begin{tabular}{c | C{1.2cm} | cccc}
\hline
\textbf{Datasets} & \nef Strength & \methodhom & \method-EC & \method-A & \method \\
\hline
\textbf{Synthetic} & \multirow{2}{*}{Strong} & 77.7$\pm$0.0 & 68.0$\pm$0.1 & 77.4$\pm$0.0 & \gold{80.5$\pm$0.0} \\
\textbf{Pokec-Gender} &  & 56.9$\pm$0.1 & 64.9$\pm$0.2 & 64.8$\pm$0.2 & \gold{67.3$\pm$0.1} \\
\hline
\textbf{arXiv-Year} (imba.) & \multirow{2}{*}{Weak} & 37.0$\pm$0.3 & 36.5$\pm$1.0 & 35.7$\pm$0.6 & \gold{38.4$\pm$0.0} \\
\textbf{Patent-Year} (imba.) &  & 24.1$\pm$0.0 & 24.0$\pm$0.9 & \gold{28.7$\pm$0.1} & \gold{28.7$\pm$0.0} \\
\hline
\end{tabular}
}}
\end{minipage} \hfill
\setlength{\tabcolsep}{1pt}
\begin{minipage} {0.38\linewidth}
\caption{\emphasize{\method is thrifty.} AWS dollar cost (\$) is reported, by \blue{t3.small} and \red{p3.2xlarge}. \label{tab:dollar}}
\setlength\fboxsep{0pt}
\centering{\resizebox{1\columnwidth}{!}{
\begin{tabular}{c | cc}
\hline
\textbf{Datasets} & \method & \gcn \\
\hline
\textbf{Pokec-Gender}   & \gold{\blue{\$ 0.33 (1.0$\times$)}} & \red{\$ 12.61 (45.0$\times$)} \\
\textbf{Pokec-Locality} & \gold{\blue{\$ 0.53 (1.0$\times$)}} & \red{\$ 13.66 (29.1$\times$)} \\
\hline
\end{tabular}
}}
\end{minipage}
\vspace{-4mm}
\end{table}

\begin{figure}[t]
\begin{minipage} {0.38\linewidth}
\centering
\includegraphics[scale=0.35]{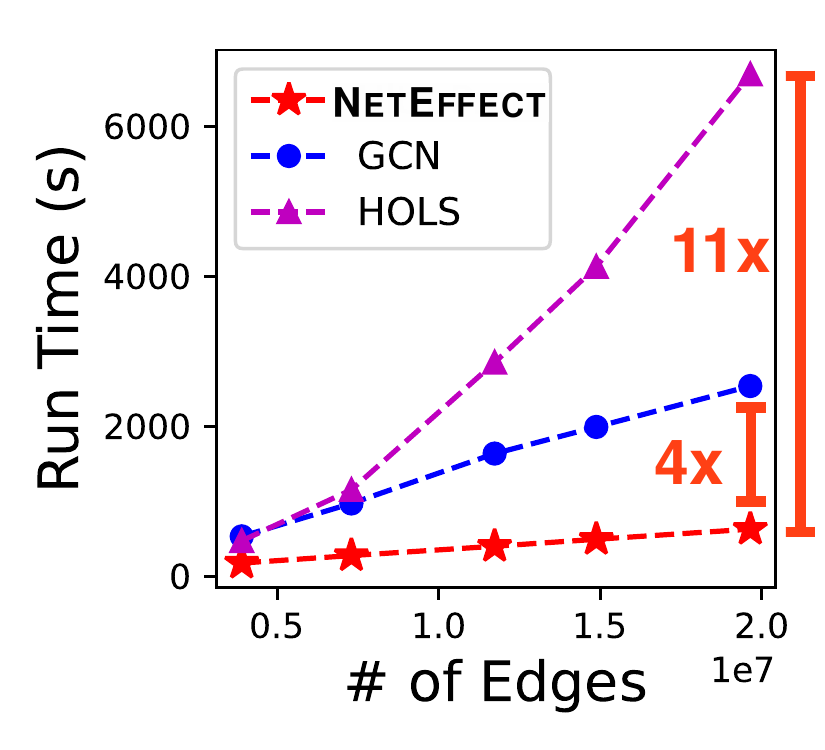}
\vspace{-3mm}
\caption{\emphasize{\method is scalable.} It is fast and scales linearly with the edge number.} \label{fig:scale}
\end{minipage} \hfill
\begin{minipage} {0.6\linewidth}
\centering
\subfloat[\label{fig:ex2} \scriptsize ``Pokec-Gender'':\\Heterophily]
{\includegraphics[height=0.9in]{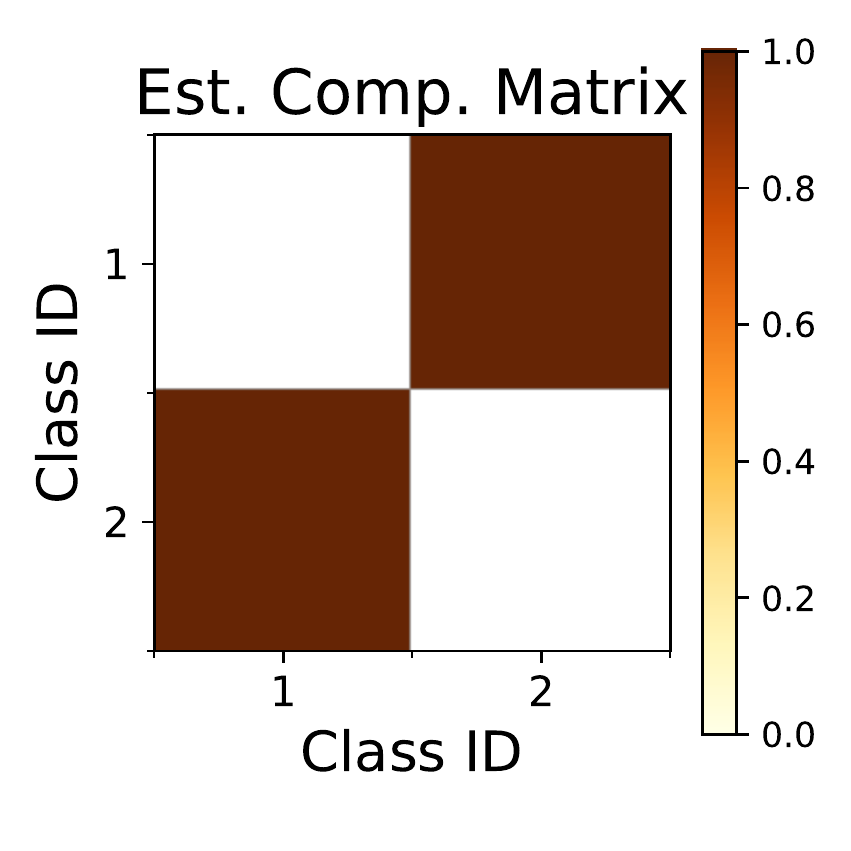}}
\subfloat[\label{fig:ex3} \scriptsize ``arXiv-Year'':\\\xophily]
{\includegraphics[height=0.9in]{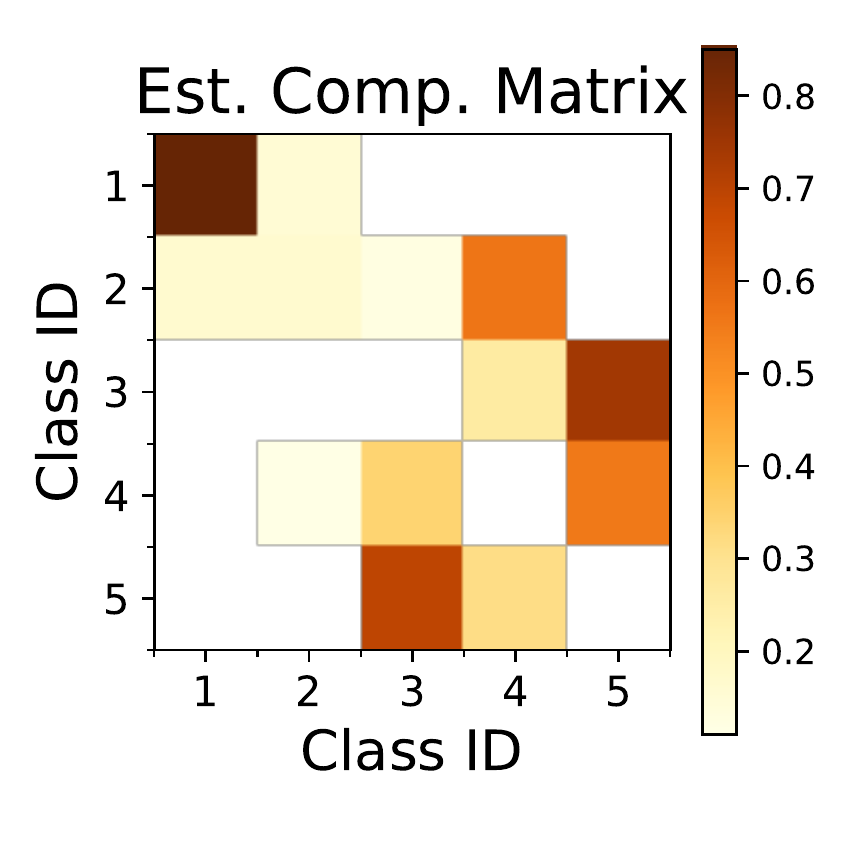}}
\subfloat[\label{fig:ex4} \scriptsize ``Patent-Year'':\\Heterophily]
{\includegraphics[height=0.9in]{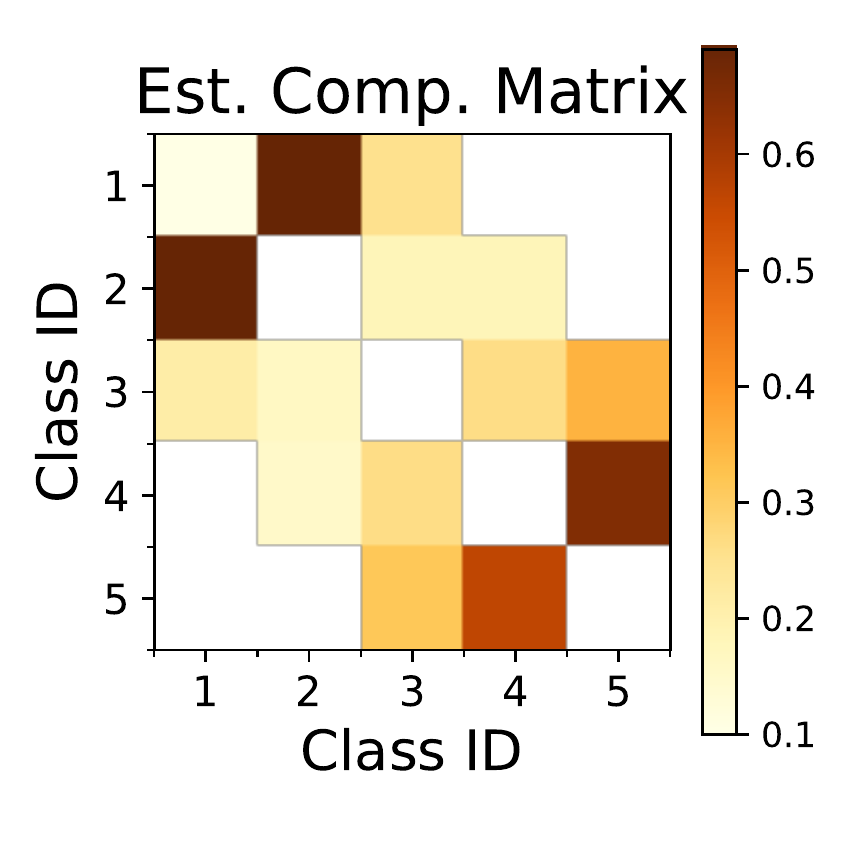}}
\vspace{-3mm}
\caption{\label{fig:ex} \emphasize{\method is explainable.} Our estimated compatibility matrices are much more robust to noises compared to edge counting (in Fig.~\ref{fig:dis}).}
\end{minipage}
\vspace{-7mm}
\end{figure}


\vspace{-5mm}
\section{Conclusions} \label{sec:concl}
\vspace{-2mm}


 We analyze the \neteffect (\nef) in node classification in the presence of only few labels.
Our proposed \method has the following  desirable properties:
\ben
\item {\em \theory}: \methodtest to statistically identify the presence of \nef,
\item {\em \general} and {\em \explain}: \methodest to estimate \nef with derived closed-form solution, if there is any, and
\item {\em \accurate} and {\em \scale}: \methodexp to efficiently exploit \nef for better performance on node classification.
\een
Applied on a real-world graph with {\em$22.3$M} edges, \method only requires {\em $14$ minutes}, and outperforms baselines on both accuracy and speed ($\geq 7\times$).



\vspace{-3mm}
\bibliographystyle{splncs04}
\bibliography{BIB/ref}

\newpage
\appendixpageoff
\appendixtitleoff
\renewcommand{\appendixtocname}{Supplementary Material}
\begin{appendix}
\label{sec:app}




\section{Proof}

\subsection{Proof of Lemma~\ref{lem:nef}}
\label{ap:subsec:proofprop1}

\begin{proof}
We derive the solution by the closed formula of linear regression and use the property of the mixed Kronecker matrix-vector product.
By vectorizing Eqn.~\ref{eq:simple}, we have:
\begin{align}
\vspace{-1mm}
\text{vec}{(\hat{{\boldsymbol E}})} & = \text{vec}{({\boldsymbol A}\hat{{\boldsymbol E}}\hat{{\boldsymbol H}})} \\
& = \text{vec}{(({\boldsymbol A}\hat{{\boldsymbol E}})\hat{{\boldsymbol H}}{\boldsymbol I}_{c \times c})} \\
& = ({\boldsymbol I}_{c \times c} \otimes ({\boldsymbol A}\hat{{\boldsymbol E}}))\text{vec}{(\hat{{\boldsymbol H}})}
\vspace{-1mm}
\end{align}
Let ${\boldsymbol X} = {\boldsymbol I}_{c \times c} \otimes ({\boldsymbol A}\hat{{\boldsymbol E}})$ and ${\boldsymbol y} = \text{vec}{(\hat{{\boldsymbol E}})}$, the above equation can be solved by linear regression with the closed-form solution of $\text{vec}{(\hat{{\boldsymbol H}})}$,
which completes the proof.
\hfill$\blacksquare$
\end{proof}

\subsection{Proof of Lemma~\ref{lem:crw1} and~\ref{lem:crw2}}
\label{ap:subsec:prooflemma12}
\begin{proof}
For a $L$-steps random walk sequence $S$ with $M$ trials, its length $|S|$ is $LM$. A random variable $X$ denotes the probability of node $i$ will walk to its $j$-th neighbor is:
\begin{equation}
    X = \mathbb{P}(\text{node } i \text{ walks to } N(i)_{j}) = \frac{\sum_{k=1}^{|S|}{\mathbbm{1}(N(i)_{j} = S_{k})}}{|S|},
\end{equation}
where $\mathbb{P}$ is the probability and $\mathbbm{1}$ is the indicator. With regular random walk in the graph, $X$ is upper-bounded by $\frac{\lceil (L - 1) / 2 \rceil}{L}$. By applying Hoeffding's inequality, we have:
\begin{equation}
    \mathbb{P}(|\hat{\mu}_{|S|} - \mu| \geq \epsilon) \leq 2\exp{\frac{-2L^{3}Mt^{2}}{\lceil (L - 1) / 2 \rceil^{2}}},
\end{equation}
where $\hat{\mu}_{|S|}$ denotes the sampled mean of the given random variable, and $\mu$ denotes the expectation. Let $\delta = 2\exp{\frac{-2L^{3}Mt^{2}}{\lceil (L - 1) / 2 \rceil^{2}}}$, with probability $1 - \delta$, we have the error:
\begin{equation}
    \epsilon = |\hat{\mu}_{|S|} - \mu| \leq \frac{\lceil (L - 1) / 2 \rceil}{L} \sqrt{\frac{\log{(2/\delta)}}{2LM}}
\end{equation}
With non-backtracking random walk \cite{alon2007non}, the upper bound of $X$ can be decreased to $\frac{\lceil (L - 1) / 3 \rceil}{L}$. Let $\delta = 2\exp{\frac{-2L^{3}Mt^{2}}{\lceil (L - 1) / 3 \rceil^{2}}}$, with probability $1 - \delta$, we now have the error:
\begin{equation}
    \epsilon = |\hat{\mu}_{|S|} - \mu| \leq \frac{\lceil (L - 1) / 3 \rceil}{L} \sqrt{\frac{\log{(2/\delta)}}{2LM}}
\end{equation}
\hfill$\blacksquare$
\end{proof}


\subsection{Proof of Lemma~\ref{lem:con}} \label{ap:subsec:proofcon}
\begin{proof}
\method exactly converges if and only if $\rho{({\boldsymbol A}^{*})}\rho{(\hat{\boldsymbol H}^{*})} < 1$. 
${\boldsymbol H}^{*}$ is row-normalized, where $\rho{({\boldsymbol H}^{*})} = 1$ is a constant, and is less than $1$ after centering.
The scaling factor $f$ multiplied to the propagation has to be in the range of $(0, 1/\rho{({\boldsymbol A}^{*})})$ to meet the criterion of exact convergence.
\end{proof}

\subsection{Proof of Lemma~\ref{lem:complexity}} \label{ap:subsec:proofcomplexity}
\vspace{-2mm}
\begin{proof}
For \methodest, since there are $c$ sets of parameters are independent, we can separate the problem into $c$ tasks, where each contains $c$ features and $|\mathcal{P}|$ samples. 
The complexity can then be reduced to $O(|\mathcal{P}| \cdot c^{3})$, and the efficient leave-one-out cross-validation only needs to be done once.
For \methodexp, for each random walk, each node visits at most $L \cdot M$ unique nodes, so the maximum number of non-zero elements in ${\boldsymbol W}$ is either $n \cdot L \cdot M$ if we have not walked through all the edges, or $m$ otherwise. 
SVD on ${\boldsymbol W}$ takes $O(d \cdot \max{(m, n \cdot L \cdot M)})$. 
It takes at most $O(m + n)$ for sparse matrix multiplication to run $t$ iterations.
Thus, the time complexity is $O(d \max{(m, n \cdot L \cdot M)} + |\mathcal{P}| \cdot c^{3} + m)$.
In practice, $c$, $|\mathcal{P}|$ and $t$ are usually small constants which are negligible, and $m$ is usually much larger. 
Keeping only the dominating terms, the time complexity is approximately $O(m)$.
${\boldsymbol W}$ contains at most $\max{(m, n \cdot L \cdot M)}$ non-zero elements. The Kronecker product at most contains $n \cdot c^{2}$ non-zero elements. 
The space complexity is $O(\max{(m, n \cdot L \cdot M)} + n \cdot c^{2})$.
\vspace{-4mm}
\end{proof}



\section{Algorithms}
\vspace{-3mm}
\subsection{\methodtest} \label{algo:fisher}
\vspace{-2mm}
\begin{algorithm}[H]
\KwData{Edges $\mathcal{E}$ and priors $\mathcal{P}$}
\KwResult{$p$-value table ${\boldsymbol F}$}
\tcc{edges with both nodes in priors}
Extract $\mathcal{E}^{'}$ such that $(i, j) \in \mathcal{E}, i, j \in \mathcal{P} \ \forall (i, j) \in \mathcal{E}^{'}$\;
${\boldsymbol T} \leftarrow {\boldsymbol O}_{c \times c}$\tcp*{test statistic table}
\tcc{do $\chi^{2}$ test for $B$ times ($B = 1000$ by default)}
\For{$b_{1}=1,...,B$}{
    \For{$c_{1}=1,...,c-1$}{
        \For{$c_{2}=c_{1}+1,...,c$}{
            ${\boldsymbol V} \leftarrow {\boldsymbol O}_{2 \times 2}$\tcp*{contingency table}
            \For{$(i, j) \in \text{Sampled}(\mathcal{E}^{'})$}{
                \uIf{$l(i) = c_{1}$ and $l(j) = c_{1}$}{
                    ${\boldsymbol V}_{11} \leftarrow {\boldsymbol V}_{11} + 2$\;
                }
                \uElseIf{($l(i) = c_{1}$ and $l(j) = c_{2}$) or \\($l(i) = c_{2}$ and $l(j) = c_{1}$)}{
                    ${\boldsymbol V}_{21} \leftarrow {\boldsymbol V}_{21} + 1$ and ${\boldsymbol V}_{12} \leftarrow {\boldsymbol V}_{12} + 1$\;
                }
                \uElseIf{$l(i) = c_{2}$ and $l(j) = c_{2}$}{
                    ${\boldsymbol V}_{22} \leftarrow {\boldsymbol V}_{22} + 2$\;
                }
                \If{$\sum_{i=1}^{2}{\sum_{j=1}^{2}{\boldsymbol {\boldsymbol V}_{ij}}} > 500$}{
                    Break\;
                }
            }
            \tcc{record statistics of class pairs}
            $T = \chi^{2}\text{-Test-Statistic}({\boldsymbol V} / 2)$\;
            ${\boldsymbol T}_{c_{1}c_{2}} \leftarrow {\boldsymbol T}_{c_{1}c_{2}} + T / B $ and ${\boldsymbol T}_{c_{2}c_{1}} \leftarrow {\boldsymbol T}_{c_{2}c_{1}} + T / B$\;
        }
    }
}
Compute $p$-value table ${\boldsymbol F}_{c \times c}$ with average statistics in ${\boldsymbol T}$\;
Return ${\boldsymbol F}$\;
\caption{\methodtest}
\end{algorithm}

\subsection{\methodexp} \label{algo:main}
\begin{algorithm}[H]
\KwData{``Emphasis'' matrix ${\boldsymbol A}^{*}$, estimated compatibility matrix $\hat{\boldsymbol H}^{*}$, and initial belief $\hat{{\boldsymbol E}}$}
\KwResult{Final belief ${\boldsymbol B}$}
$\hat{{\boldsymbol B}}_{(0)} \leftarrow {\boldsymbol O}_{n \times c}, t \leftarrow 0$\;
\While{$\lVert\hat{{\boldsymbol B}}_{(t+1)} - \hat{{\boldsymbol B}}_{(t)}\rVert_{1} > 1$}{
   $\hat{{\boldsymbol B}}_{(t+1)} \leftarrow \hat{{\boldsymbol E}} + f {\boldsymbol A}^{*}\hat{{\boldsymbol B}}_{(t)}\hat{{\boldsymbol H}}^{*}$\;
   $t \leftarrow t + 1$\;
}
Return ${\boldsymbol B} \leftarrow \hat{{\boldsymbol B}}_{(t)} + 1/c$\;
\caption{\methodexp}
\end{algorithm}

\section{Reproducibility} \label{sec:rep}

\subsection{Datasets}
We include $3$ citation networks: ``arXiv-Year'' \cite{hu2020open}, ``Patent-Year'' \cite{leskovec2005graphs}, and ``arXiv-Category'' \cite{wang2020microsoft}, 
and $4$ social networks: ``Pokec-Gender'' \cite{takac2012data}, ``Facebook'' \cite{rozemberczki2019multiscale}, ``GitHub'' \cite{rozemberczki2019multiscale}, and ``Pokec-Locality'' \cite{takac2012data}.
``Synthetic'' is the enlarged graph in Fig.~\ref{fig:c2}. 
Noisy edges are injected, and the communities are connected by dense blocks.

\setlength\intextsep{0pt}
\begin{wrapfigure}{L}{0.5\columnwidth}
\begin{minipage}[t]{0.5\columnwidth}
\captionof{table}{Hyperparameters for GNNs \label{tab:hyper}}
\centering{\resizebox{1\textwidth}{!}{
\begin{tabular}{c | l}
\hline
Method & Hyperparameters \\
\hline
\gcn & lr=0.01, wd=0.0005, hidden=16, dropout=0.5 \\ 
\appnp & lr=0.002, wd=0.0005, hidden=64, dropout=0.5, K=10, alpha=0.1 \\
\mixhop & lr=0.01, wd=0.0005, cutoff=0.1, layers1=[200, 200, 200], layers2=[200, 200, 200] \\
\gprgnn & lr=0.002, wd=0.0005, hidden=64, dropout=0.5, K=10, alpha=0.1 \\
\hline
\end{tabular}
}}
\end{minipage}
\end{wrapfigure}

\subsection{Hyperparameters}
For \method, we use random walks of length $4$ with $10$ trials except ``GitHub'', ``arXiv-Category'' and ``Pokec-Locality'', where we use $30$ trials. 
The decomposition rank is set to be $256$.
The weights of \hols for different motifs are set to be equal.
For GNNs, we train for $200$ epochs and use the default hyperparameters given by the authors in Table~\ref{tab:hyper}.





\subsection{Scalability} \label{ssec:dollar}
We select AWS machines with comparable specs.
For CPU machine, we select t3.small with 3.3GHz CPU, which is faster than ours, and costs $\$0.023$ per hour.
For GPU machine, we select p3.2xlarge with a V100 GPU, which costs $\$3.06$ per hour, which is 0.89 slower than the RTX A6000 GPU we use on running PyTorch\footnote{https://lambdalabs.com/blog/nvidia-rtx-a6000-benchmarks/}.
The running time of \gcn on ``Pokec-Gender'' and ``Pokec-Locality'' are $673$ and $730$ seconds, respectively.
Using the provided information, the results in Table~\ref{tab:dollar} can be computed.
\end{appendix}



\end{document}